\documentclass[11pt]{article}

\usepackage[utf8]{inputenc}
\usepackage[T1]{fontenc}
\usepackage{amsmath}
\usepackage{amssymb}
\usepackage{amsthm}
\usepackage{mathtools}
\usepackage{graphicx}
\usepackage{booktabs}
\usepackage{algorithm}
\usepackage{algpseudocode}
\usepackage{xcolor}
\usepackage{hyperref}
\usepackage{cleveref}
\usepackage{rotating}
\usepackage{makecell}
\usepackage{booktabs}
\usepackage{xspace}
\usepackage{subcaption}
\usepackage{pdflscape}
\usepackage{relsize}
\usepackage{adjustbox} 
\usepackage{tikz}
\usetikzlibrary{arrows.meta, positioning, calc}

\DeclareMathOperator{\supp}{supp}
\DeclareMathOperator{\set}{set}

\newtheorem{theorem}{Theorem}
\newtheorem{lemma}{Lemma}
\newtheorem{proposition}{Proposition}
\theoremstyle{definition}
\newtheorem{definition}{Definition}

\renewcommand{\triangledown}{\begin{sideways}\begin{sideways} \ensuremath{\triangle} \end{sideways}\end{sideways}}

\newcommand{\triangledownscript}{\begin{sideways}\begin{sideways} $\scriptstyle{\triangle}$ \end{sideways}\end{sideways}}

\newcommand{\sactivity}{{\ensuremath{\blacktriangleright}}\xspace}
\newcommand{\eactivity}{\ensuremath{\mathsmaller{\blacksquare}}\xspace}

\title{Monotonicity-Guided Bottom-Up Petri Net Discovery: The \textsc{SPECpp} Framework}
\author{Leah Tacke genannt Unterberg \and Lisa L. Mannel \and Wil M. P. van der Aalst\footnote{Corresponding author \url{wvdaalst@pads.rwth-aachen.de}}}
\date{\small Process And Data Science (PADS) \\ RWTH Aachen University \\ D-52074 Aachen, Germany}

\begin{document}

\maketitle

\begin{abstract}
Process discovery is one of the central challenges in process mining. 
Petri nets are particularly attractive because simple local constructs can express complex behavior, including concurrency. 
While their global behavior may be difficult to analyze, individual places can be efficiently characterized using monotonic properties, enabling bottom-up discovery. 
Unlike top-down approaches such as the Inductive Miner, which rely on predefined constructs for sequences, choices, loops, and concurrency, our approach allows such structures to emerge organically and can exploit the full expressive power of Petri nets, including free-choice constructs and long-term dependencies. 
The main challenge is the exponential number of candidate places and their combinations. 
We present the \textsc{SPECpp} framework which implements strategies to obtain high-quality models under time and resource constraints. 
\textsc{SPECpp} supports rapid experimentation and is used to evaluate these strategies using both synthetic and real-life event data.
\end{abstract}

%

\noindent\textbf{Keywords:} Bottom-up process discovery $\cdot$ Petri nets $\cdot$ Framework

\section{Introduction}
\label{sec:introduction}

Ever since the inception of process mining, process discovery has been the most researched task within the discipline \cite{vanderAalst2016DataScience,PMhandbook-SS22}. The automated, data-based inference of a process model that accurately portrays reality and lends itself to further analysis is immensely appealing. Understandably, process discovery is a key capability in the portfolio of process mining vendors like Celonis, SAP Signavio, Apromore, etc. Their increasing commercial success drives industry adoption, thereby opening up the field to new application domains and unseen levels of volume and variety, which again reinforces the need for further development of the core technologies.

While simpler representations such as directly-follows graphs remain popular \cite{centeris-keynote2019}, there is a clear need for executable, precise, and human-readable process models able to express concurrency. 
Such models can serve as a basis for simulation, explainable prediction, and ultimately automated process improvement.

From early work such as the Alpha Miner~\cite{vanderAalst2004WorkflowMining} to the ever popular Inductive Miner~\cite{Leemans2013DiscoveringBlockStructured}, all discovery algorithms directly and indirectly introduce representational and modeling biases in part by virtue of their structure. 
Some algorithms can be clearly categorized as functioning in a top-down or bottom-up manner \cite{vanderAalst2016DataScience,Ch2-PMhandbook-SS22}. Informally, they either try to detect a global structure and subdivide the problem, or agglomerate atomic behavioral patterns. While there is no agreed-upon definition in the relevant literature, we use the following intuition. The inductive miner is an example of the former class, as it extracts the ``outermost'' behavior first, and recursively refines the inner behavior of the resulting independent sub models. This type of top-down specification provides ample structure which is useful for giving guarantees on the model level. However, it is also the reason for a major weakness: it is assumed that models are block-structured. Therefore, this divide-and-conquer style of independent sub problems introduces the limitation of not being able to discover dependencies spanning multiple sub models (blocks). We discuss related work further in \Cref{sec:related-work}.

In contrast to that, bottom-up approaches do not have to concern themselves with prior overall model structure specifications. The structure organically emerges from the agglomeration of minimal constructs. Put in another way: instead of attempting to detect structures in the input and translating it to the basic building blocks of the modeling formalism, we start with these basic building blocks and see which fit the input. For Petri nets, these minimal constructs are places. While their combination is intractable to analyze (Petri net reachability is EXPSPACE-hard~\cite{Lipton1976ReachabilityProblem}) and make general guarantees about without drastically limiting the expressiveness (e.g., block-structured models), individual places make for great candidates for efficiently reasoning on. Constraints on the whole model, like minimum fitness, are reflected on its individual places. By putting constraints on the individual places that eventually form the Petri net, multiple of the well-known model quality dimensions fitness, precision, simplicity and generalization~\cite{Buijs2012Role} can be balanced starting from the low-level.

We propose a conceptual framework named \textsc{SPECpp} that structures this kind of discovery into iterative candidate Proposal, Evaluation and Composition (PEC-cycle) followed by post-processing (pp) for cleanup. The ``S'' of the acronym comes from the implementation which we detail in \Cref{sec:implementation}. \Cref{fig:framework-overview} provides a high-level overview. Essentially, we directly, and even better, indirectly, consider all possible places (like ILP-based approaches) and collect those passing an evaluation step in an intermediate result. This intermediate result can inform new place proposal, e.g., to guide the candidate search. The main strategy to combat intractability is exploiting monotonicity properties of places and place sets. A second, complementary strategy is to employ greedy search to further restrict the number of candidates considered.
In our current base framework instantiation based on eST-Miner variants~\cite{Mannel2019FindingComplex,Mannel2019FindingUniwired,Mannel2022DiscoveringProcessModels,Mannel2020RemovingImplicit,Mannel2020ImprovingState} by Mannel et al., we inherit the limitation of only supporting uniquely labeled transitions (so no silent/tau transitions either). However, we can use pre-processing approaches like~\cite{Li2007ExtendingAlpha} to address this.

\begin{figure}
  \centering
  \includegraphics[width=\textwidth]{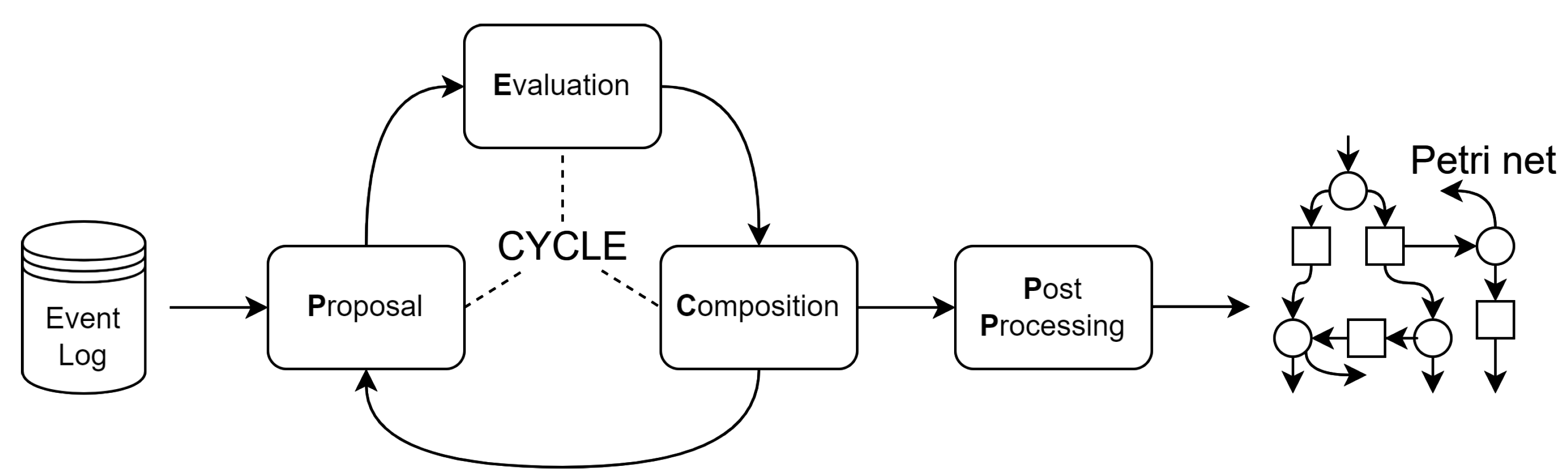}
  \caption{A high-level overview of the framework flow. An event log is taken as input to inform a \emph{proposal}, \emph{evaluation} and \emph{composition} loop of candidate places. The final resulting Petri net is constructed after a \emph{post-processing} step.}
  \label{fig:framework-overview}
\end{figure}

To motivate the necessity of bottom-up approaches like this, we just need to turn our head to process control flows with complex dependencies that are not easily separated. For example, so-called long-term dependencies and short loops. We adapt a minimal example from Mannel et al.~\cite{Mannel2019FindingComplex}. Imagine a delivery process where in the beginning either a regular or a VIP-customer's order comes in (denoted by activities \texttt{reg.\ order} ($o$) and \texttt{VIP order} ($vo$)). Then, a varying number of deliveries (\texttt{delivery} ($d$)) are made for that order. Finally, the customer is invoiced the regular price or the VIP special price (\texttt{reg.\ invoice} ($i$) and \texttt{VIP invoice} ($vi$)). Of course, the invoice type is in accordance with the earlier order type. A sample event log of records of this process could be $[\langle \sactivity, o, d, d, i, \eactivity \rangle, \langle \sactivity, vo, d, d, d, vi, \eactivity \rangle]$, with $\sactivity$ and $\eactivity$ denoting the start and end of a process instance respectively. \Cref{fig:delivery-example} shows the results of various established discovery algorithms on this input. 
Among existing approaches, only the eST-Miner correctly captures both the long-term dependency between order and invoice type and the delivery self-loop.

\begin{figure}[htbp]
    \centering
\begin{adjustbox}{width=\textwidth, totalheight=0.8\textheight, keepaspectratio}
        \begin{tabular}{c} 
            
            \begin{subfigure}{\textwidth}
                \centering
                \includegraphics[width=\linewidth]{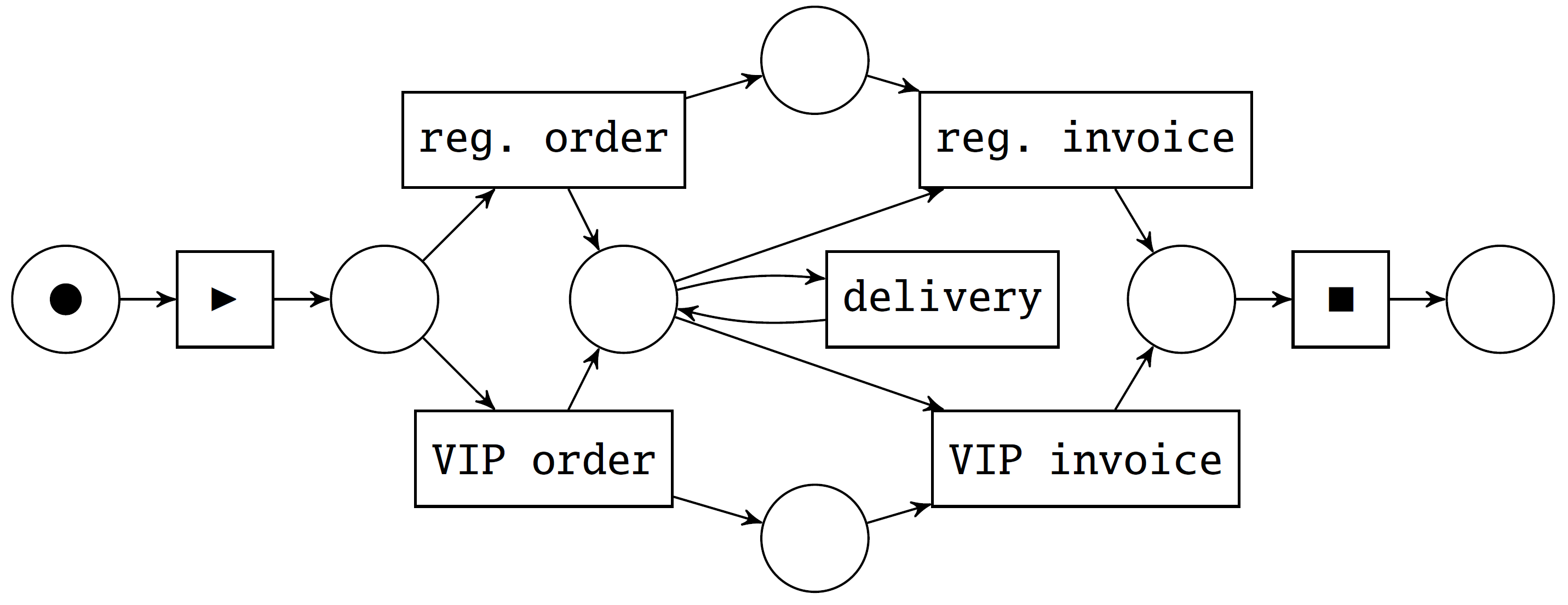}
                \caption{Base eST-Miner}
                \label{fig:delivery-alg1}
            \end{subfigure} \\ [1em]

            \begin{subfigure}{\textwidth}
                \centering
                \includegraphics[width=\linewidth]{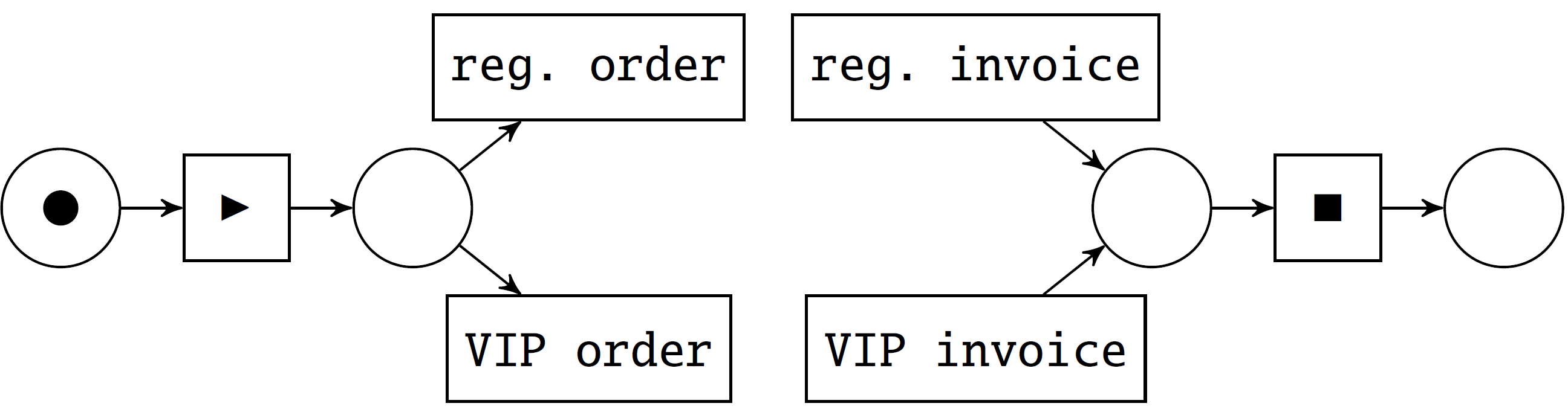}
                \caption{Alpha Miner}
                \label{fig:delivery-alg2}
            \end{subfigure} \\ [1em]

            \begin{subfigure}{\textwidth}
                \centering
                \includegraphics[width=\linewidth]{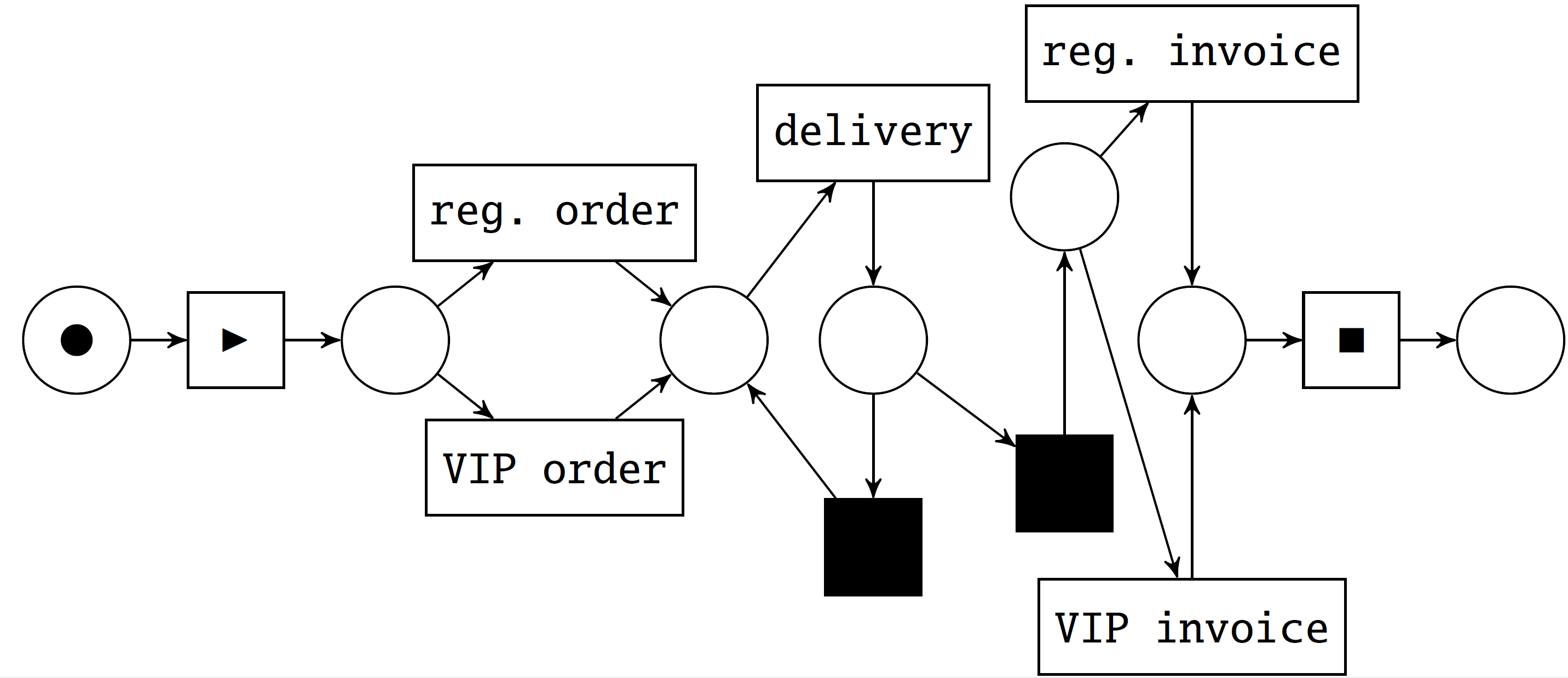}
                \caption{Inductive Miner}
                \label{fig:delivery-alg3}
            \end{subfigure} \\ [1em]

            \begin{subfigure}{\textwidth}
                \centering
                \includegraphics[width=\linewidth]{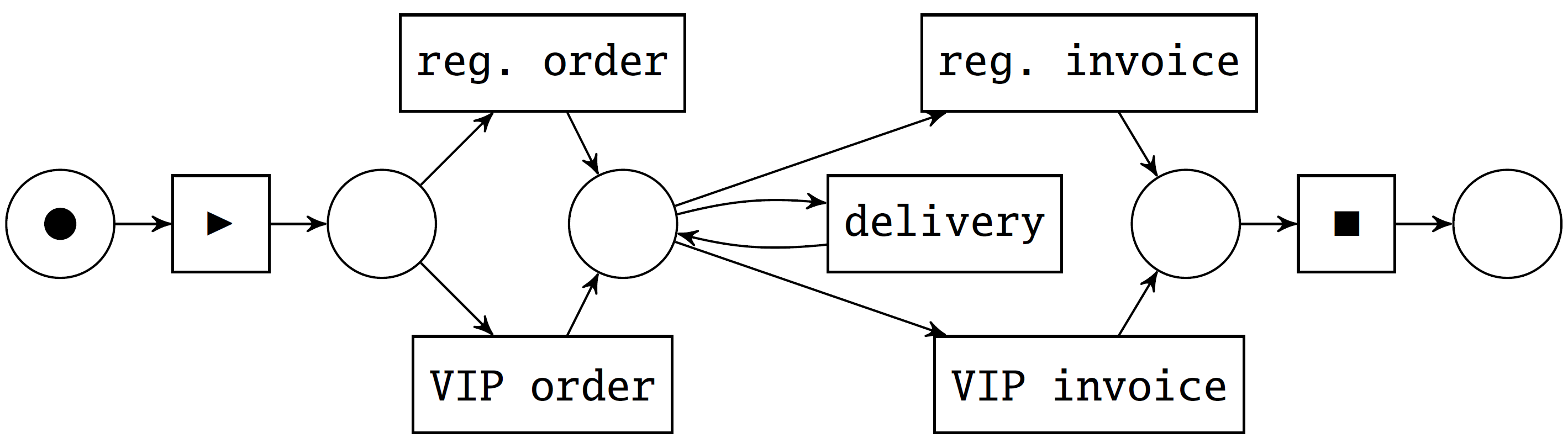}
                \caption{ILP Miner}
                \label{fig:delivery-alg4}
            \end{subfigure}

        \end{tabular}
    \end{adjustbox}

    \caption{Petri nets discovered by various algorithms on the input log \linebreak
    $[\langle \sactivity, o, d, d, i, \eactivity \rangle, 
    \langle \sactivity, vo, d, d, d, vi, \eactivity \rangle]$. 
    The long-term (non-directly following) dependency between VIP or regular 
    types, particularly with the looping activity \texttt{delivery}, provides 
    a challenge for established discovery algorithms.}
    \label{fig:delivery-example}
\end{figure}

It is important to emphasize that there is never ``one correct'' model. Model quality is informed by its purpose. The conceptual framework we present in this paper is highly flexible and configurable to be adaptable to the user's needs to balance the aforementioned quality dimensions.
Bottom-up discovery requires two key decisions: which candidate places to consider and how to combine accepted places into a model. We call these candidate proposal and candidate composition, respectively. Candidate composition is informed by candidate evaluation. Together, proposal, evaluation, and composition form the PEC cycle, which is followed by a post-processing stage.
Next to the conceptual framework we present an extensive implementation (which we introduce in \Cref{sec:implementation}) serving as a software framework for fast prototyping of varied metrics and constraint types that guide the discovery phase.

The rest of this paper is structured as follows. We first provide a formal background in \Cref{sec:background}, before we introduce our conceptual framework in \Cref{sec:conceptual-framework}. Then, we go over the related implementation artifacts that we published in \Cref{sec:implementation}. We then include a brief evaluation in \Cref{sec:experiments} to validate our approach. After positioning this framework among related work in \Cref{sec:related-work}, we conclude the paper with some discussion and outlook on future work in \Cref{sec:conclusions}.

\section{Background}
\label{sec:background}

We use sets $X = \{x_1, x_2, \ldots\}$, multisets $M = [x_1^{f_1}, x_2^{f_2}, \ldots]$ and finite ordered sequences $\sigma = \langle s_1, s_2, \ldots, s_n \rangle \in S^*$ with the usual semantics. Elements in the support set of a multiset $x_i \in \supp(M)$ have positive frequency $f_i \in \mathbb{N}$ and $\sigma(i) = s_i$ with $i \in \{1, \ldots, n\}$ refers to the $i$-th element of a sequence. The elements of a sequence are $\set(\sigma) = \{\sigma(1), \ldots, \sigma(n)\}$. $|\cdot|$ denotes the cardinality, i.e., number of elements, in a set, multiset or sequence. We use $\underline{k} = \{1, \ldots, k\}$ as a shorthand for integer index sets (with $\underline{0} = \emptyset$). The set of possible finite sequences over an alphabet $\Sigma$ is $\Sigma^* = \bigcup_{n \in \mathbb{N}_0} \Sigma^n$, where $\Sigma^n$ denotes all sequences of length $n \in \mathbb{N}_0$. The powerset over $X$ is denoted by $\mathcal{P}(X)$ whereas $\mathcal{B}(X)$ refers to all multisets over $X$.

Further, we use permutations of sets $S$, $\pi \in \mathrm{Perm}(S)$ with $\pi : S \to \underline{|S|}$ such that $\forall i, j \in \underline{n} : i \neq j \Leftrightarrow \pi(i) \neq \pi(j)$, i.e., $\pi$ is bijective. A permutation uniquely maps a
set to its ordered sequence by specifying element indices. A total ordering $\leq$ on $X$ (a reflexive, antisymmetric and transitive relation) which further relates all elements, i.e., $\forall x, y \in X : x \leq y \vee y \leq x$, induces a permutation $\pi_{\leq}$ with $\forall x, y \in X : \pi_{\leq}(x) \leq \pi_{\leq}(y) \Leftrightarrow x \leq y$.

Additionally, we use predicate functions over sets $X$ of the form $f : X \to \{\mathrm{True}, \mathrm{False}\}$. We usually define them with logical expressions $\Rightarrow, \vee, \wedge, \neg$ over the elements of $X$. They are used as filters on subsets $S \subseteq X$ with the following notation: $S \upharpoonright_f = \{s \in S \mid f(s) = \mathrm{True}\}$. Restrictions on functions restrict the domain, i.e., a function $g : X \to Y$ is restricted by $f$ to $\mathrm{dom}(g \upharpoonright_f) = X \upharpoonright_f$.

In process mining, events are typically differentiated from their associated activity, however, we only care about event logs as sequences of activities in our setting. There are many opportunities for incorporating further event attributes into discovery or enriching the resulting models~\cite{Mannhardt2015MultiPerspective}. Though the implementation artifacts of this work support such extensions, particularly for heuristics, we leave this as future work.

\begin{definition}[Activities, Behaviors]
\label{def:activities-behaviors}
$\mathbb{A}$ is the universe of activities (actions, tasks, etc.). $\{\sactivity, \eactivity\} \subseteq \mathbb{A}$ are unique start and end activities. A \emph{behavior} is a sequence $\sigma = \langle a_1, \ldots, a_n \rangle \in \mathbb{A}^*$ of activities that starts with $\sactivity$ and ends with $\eactivity$, i.e., $a_1 = \sactivity \wedge a_n = \eactivity$ and $\forall 1 < i < n : a_i \notin \{\sactivity, \eactivity\}$. The universe of possible behaviors is then given by $\mathbb{B} = \{\sigma \in \mathbb{A}^* \mid \sigma \text{ is a behavior}\}$.

Note that, while sequences can be empty, a behavior has at least length two. The fixed start/end activities pose no threat to generality, as sequences are trivially transformed to behaviors. We require them here to not have to deal separately with initially and finally occurring activities. It also simplifies the construction of the resulting Petri net.
\end{definition}

\begin{definition}[Event Log]
\label{def:event-log}
An event log $L$ is represented as a multiset of behaviors, i.e., $L \in \mathcal{B}(\mathbb{B})$. In this context, behaviors are also referred to as \emph{traces}.
\end{definition}

For the notation of our resulting process model, we use a Petri net based formalism. Specifically, uniquely labeled Petri nets represented by sets of places. Transitions and places are the fundamental building blocks of Petri nets. We uniquely identify transitions with activities in this work, so we may use the terms interchangeably.

\begin{definition}[Places]
\label{def:places}
$\mathbb{P} = \mathcal{P}(\mathbb{A}) \times \mathcal{P}(\mathbb{A})$ is the universe of all possible places. For $p = (I, O) \in \mathbb{P}$, we call ${\bullet}p = I$ its preset and $p{\bullet} = O$ its postset. Transitions in both, i.e., the intersection ${\bullet}p \cap p{\bullet}$ are called \emph{self-loops}. Places $\mathbb{P}^! = (\mathcal{P}(\mathbb{A}) \setminus \{\emptyset\}) \times (\mathcal{P}(\mathbb{A}) \setminus \{\emptyset\})$ with non-empty pre- and postset are called \emph{through places}.
\end{definition}

An important structural relation on places which we will be using are preset and postset expansions.

\begin{definition}[Pre- \& Postset Expansions]
\label{def:expansions}
Let $p, p' \in \mathbb{P}$ be places. $p'$ is called a preset expansion of $p$ if and only if ${\bullet}p \subset {\bullet}p'$ and $p{\bullet} = p'{\bullet}$ hold. It is \emph{direct} if ${\bullet}p$ and ${\bullet}p'$ differ by exactly one element. Postset expansions are defined symmetrically.

As defined here, we only consider pure expansions, i.e., those where only either the pre- or postset changes. For an example, consider $(\{a\}, \{b, c\})$ which is a direct postset expansion of $(\{a\}, \{b\})$ and an indirect one of $(\{a\}, \emptyset)$.
\end{definition}

Usually, the semantics of Petri net places are defined via markings, that is, counts of tokens on a place. When regarding a place and its adjacent transitions in isolation, the transitions in its postset are enabled whenever its number of tokens is positive. Of course, from the perspective of a transition, all of the places it is in the postset in need to have a positive number of tokens for it to be enabled. Enabled postset transitions can be executed to consume one token. The transitions exclusively in its preset (so non self-loops) are not constrained (by this particular place) and can fire at any time to produce a token. In our setting, this is equivalent to the property that for any prefix of a behavior, the number of activities in the preset up to the penultimate element of the prefix is greater or equal to that of those in the postset for the entire prefix. Additionally, they are equal for the entire behavior which implies equal token counts at the start and end of a trace, given the former condition holds. For our place fitness, we consider places as initially and finally unmarked. In our Petri net conversion, we add a marked start place and designated sink place, so this is not a restriction. This class is close to \emph{workflow nets} without the requirement that all transitions lie on a path between start and end.

\begin{definition}[Fitness]
\label{def:fitness}
Let $p \in \mathbb{P}$ be a place and $\sigma = \langle a_1, \ldots, a_n \rangle \in \mathbb{B}$ a behavior. The behavior $\sigma$ is \emph{fitting} on $p$, i.e., $\square_\sigma(p)$ if and only if $\text{\textsc{non-neg}}(\sigma, p) \wedge \text{\textsc{balanced}}(\sigma, p)$ holds. Where these properties are defined as follows.
\begin{align*}
\text{\textsc{non-neg}}(\sigma, p) &:\Leftrightarrow \forall k \in \underline{n} : \text{\textsc{non-neg}}_k(\sigma, p) \\
\text{\textsc{non-neg}}_k(\sigma, p) &:\Leftrightarrow |\{i \in \underline{k-1} \mid \sigma(i) \in {\bullet}p\}| \geq |\{i \in \underline{k} \mid \sigma(i) \in p{\bullet}\}| \\
\text{\textsc{balanced}}(\sigma, p) &:\Leftrightarrow |\{i \in \underline{n} \mid \sigma(i) \in {\bullet}p\}| = |\{i \in \underline{n} \mid \sigma(i) \in p{\bullet}\}|
\end{align*}
The set of fitting behaviors on $p$ is then given by $\mathit{fit}(p) = \{\sigma \in \mathbb{B} \mid \square_\sigma(p)\}$.
\end{definition}

This notion of fitness intuitively corresponds to the classic token-based replay on Petri nets~\cite{Rozinat2008ConformanceChecking}, e.g., during replay of behavior $\langle \sactivity, a, b, a, b, \eactivity \rangle$ on place $(\{a\}, \{b\})$, there are no missing tokens (\textsc{non-neg}), and at the end, there is no remaining token (\textsc{balanced}).

While one place only constrains its adjacent activities, making it efficient to analyze and reason about, the behavior of a set of places is simply the intersection of its constituents.

\begin{definition}[Sets of Places]
\label{def:sets-of-places}
Let $P \subseteq \mathbb{P}$ be a set of places and $\sigma \in \mathbb{B}$ a behavior. $\sigma$ is fitting on $P$ if and only if $\square_\sigma(P) :\Leftrightarrow \forall p \in P : \square_\sigma(p)$. The set of fitting behaviors on $P$ is $\mathit{fit}(P) = \{\sigma \in \mathbb{B} \mid \square_\sigma(P)\} = \bigcap_{p \in P} \mathit{fit}(p)$.

\end{definition}
This definition implies that any trace is fitting on an empty set of places, which intuitively makes sense as places can be regarded as constraints.

A set of places can be trivially translated to an equivalent Petri net with open-world semantics. That is, contrary to the typical closed-world semantics, an activity not contained in the model is unconstrained and can be executed at any time.

\begin{definition}[Petri net Conversion]
\label{def:petri-net-conversion}
Let $P \subseteq \mathbb{P}$ be a set of places. The set of occurring activities is $A = \bigcup_{p \in P} {\bullet}p \cup p{\bullet}$. Then $N = (P', A, F, M_{\mathrm{init}}, M_{\mathrm{final}})$ with places $P' = P \cup \{p_{\mathrm{init}}, p_{\mathrm{final}}\}$ using $p_{\mathrm{init}} = (\emptyset, \sactivity)$ and $p_{\mathrm{final}} = (\eactivity, \emptyset)$ is a behaviorally equivalent marked Petri net under open-world semantics. The arcs are uniquely specified by the places $F = \{(i, p), (p, o) \mid p \in P', i \in {\bullet}p, o \in p{\bullet}\} \subseteq (P' \times A) \cup (A \times P')$. The initial and final marking are simply $M_{\mathrm{init}} = [p_{\mathrm{init}}]$ and $M_{\mathrm{final}} = [p_{\mathrm{final}}]$.
\end{definition}

Note that, as $p_{\mathrm{init}}$ and $p_{\mathrm{final}}$ have no fitting behavior according to $\square$, they are artificially added for marking-based semantics. Given a fixed set of activities, e.g., those considered possible in a process, the Petri net could also be extended with a concurrent ``flower model gadget'' to preserve the same fitting behavior under the closed-world semantics. In practice, this is not really a limitation as the given event log used for discovery is assumed to be complete in that regard.

\section{Conceptual Framework}
\label{sec:conceptual-framework}

We present a framework for bottom-up discovery of sets of places. There are two natural hooks for strategies: which places to consider and which combinations of places to consider. We term the strategy of which places to consider candidate \emph{proposal} and which combinations to consider candidate \emph{composition}. Composition strategies rely on a preceding candidate \emph{evaluation} for their decision-making. Together, these three steps form the \emph{PEC-cycle}. It is repeated until the proposal strategy is exhausted (no remaining candidates) but may also be stopped early, e.g., by specifying a time limit. After this loop terminates, we append a one-shot post-processing pipeline. This is also where we convert from sets of places, which we use formally and internally, to Petri nets. In this work, we focus mainly on the earlier stages, however, there is a lot of research and development potential in the latter stages that we want to support. Refer to \Cref{fig:framework-overview} from the introduction for a high-level schematic overview.

For this section, we fix a finite set of considered activities $A \subseteq \mathbb{A}$ and thus a set of possible places $P = \{p \in \mathbb{P} \mid {\bullet}p \cup p{\bullet} \subseteq A\}$. In an application setting, this would typically be the set of activities occurring in the event log used as input. Then, the number of possible candidates $|P|$ is in $\mathcal{O}(2^{|A|})$ and the number of possible combinations of those candidates $|\mathcal{P}(P)|$ is in $\mathcal{O}(2^{|P|}) = \mathcal{O}(4^{|A|})$. This makes a brute-force approach wholly intractable. By exhaustive enumeration, though infeasible even for small inputs, we could trivially guarantee finding the best possible solution according to any predefined quality metric.

Luckily, there exist powerful monotonicity properties on places and place sets which enable us to drastically prune the search space. An evaluation of one candidate place can allow us to indirectly reason about a whole set of monotonically related places. Similarly, the conjunction of place properties on a place set may allow us to indirectly consider all its supersets. Additionally, we employ greedy decision-making and place set invariants (e.g., implicitness freedom).

We first introduce our conceptual proposal strategy and then move over to evaluation together with composition and finally post-processing.

\subsection{Proposal}
\label{sec:proposal}

The primary method used in our framework for limiting the number of proposed candidates is the usage of a \emph{candidate tree} in which the subtree relation is aligned to monotonicity properties on places. Further, we require a compact representation of that tree as it still has exponential size. We achieve this by using a locally computable child generation function instead of pre-computing the complete tree. The local computability allows us to incrementally unfold the tree while ``forgetting'' fully traversed nodes.

\begin{definition}[Efficient Candidate Tree]
\label{def:candidate-tree}
An efficient candidate tree $T = (g, r)$ is defined by a child generation logic $g : P \to \mathcal{P}(P)$ and a root node $r \in P$. The following properties of $g$ are required so that $T$ actually unfolds into a tree. Let $g^{-1}(p) = \{p' \in P \mid p \in g(p')\}$ for $p \in P$ be the inverse of $g$. The root has an empty set of parents, i.e., $g^{-1}(r) = \emptyset$.
\begin{align}
\forall p \in P &: p \notin g(p) \label{eq:tree1} \\
\forall p \in P \setminus \{r\} &: |g^{-1}(p)| = 1 \label{eq:tree2}
\end{align}

Further, and most importantly, $g$ has to be acyclic. For any sequence of at least two places $\langle p_1, \ldots, p_n \rangle \in P^*$, $n \geq 2$ which represents a path through $T$, i.e., $\forall 1 \leq i < n : p_{i+1} \in g(p_i)$, we require the start and end to be unique, i.e., $p_1 \neq p_n$.

\end{definition}

The acyclicity condition already implies condition~\eqref{eq:tree1}. Equation~\eqref{eq:tree2} states that each node, except the root, has to have a unique parent. Together with the finiteness of $P$, it follows that there must exist some leaf nodes without children. The places contained in a subtree rooted at a place $p \in P$ under generation logic $g$ are then well-defined by the following recursion.
\[
\mathit{cand}_g(p) = \{p\} \cup \bigcup_{p' \in g(p)} \mathit{cand}_g(p')
\]
Note that the aforementioned leaf nodes form the terminal case in this definition. We additionally require candidate trees to be complete, i.e., the root $r$ of a tree $(g, r)$ generates all possible places $\mathit{cand}_g(r) = P$. In our setting, we may relax that to at least generating all through places, i.e., $P^! \subseteq \mathit{cand}_g(r) \subseteq P$. This is because non-through places can only have fitting behavior that does not intersect with them at all. Formally, $\set(\sigma) \cap ({\bullet}p \cup p{\bullet}) = \emptyset$ for place $p \in \mathbb{P}$ and behavior $\sigma \in \mathit{fit}(p)$. That makes them uninteresting in our setting.

\begin{figure}[htbp]
  \centering
  \includegraphics[width=0.8\linewidth]{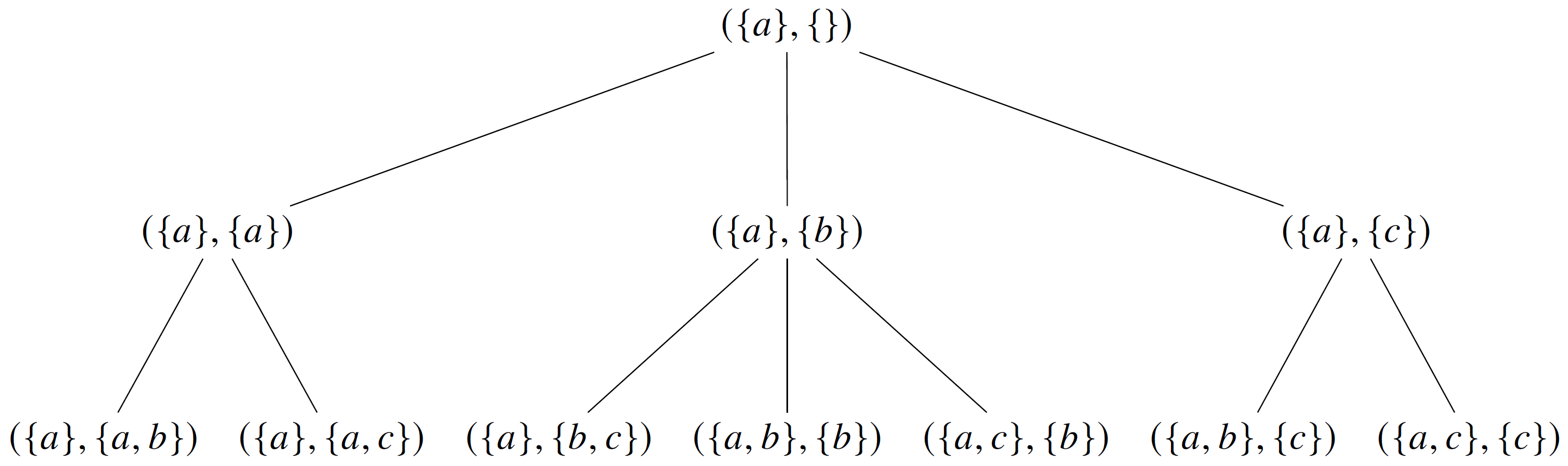}
  \caption{An exemplary \emph{partial} efficient candidate tree $T_\mathrm{ex} = (g, r)$ rooted at place $r = (\{a\}, \{\})$.}
  \label{fig:candidate-tree-example}
\end{figure}

Consider \Cref{fig:candidate-tree-example} for a partial example of such a tree $T_\mathrm{ex} = (g, r)$. Place $r = (\{a\}, \{\})$ is its root and it clearly fulfills the conditions of \Cref{def:candidate-tree}. Its candidate set $\mathit{cand}_g(r)$ consists of all visible places. Note that it is not complete.

The proposal step is simply a memory-efficient expansion of this tree. Potential candidates are proposed at most once. That is, only if they meet all constraints collected up to that moment. Constraints are how candidate evaluations and their indirect impact are incorporated into the candidate traversal.

\begin{definition}[Constraints]
\label{def:constraints}
A constraint $c : P \to \{\mathrm{True}, \mathrm{False}\}$ is a predicate function over the set of possible places. A place $p \in P$ meets constraint $c$, if and only if $c(p) = \mathrm{True}$.

To lift this to sets of constraints $C$, we use an indicator function $\mathbb{I}_C(p) :\Leftrightarrow \forall c \in C : c(p) = \mathrm{True}$. We use it to filter sets of places $S \subseteq P$ to those meeting the constraints $S \upharpoonright_{\mathbb{I}_C} = \{p \in S \mid \mathbb{I}_C(p) \text{ holds}\}$. For sets of places $S \subseteq P$, we also have $\mathbb{I}_C(S) :\Leftrightarrow \forall p \in S : \mathbb{I}_C(p)$ as an indicator whether all places meet all constraints. What we are really interested in are constraints that are aligned with the subtree relation of our candidate tree. This enables efficient pruning.
\end{definition}

\begin{definition}[Subtree-Monotonicity]
\label{def:subtree-monotonicity}
We call a set of constraints $C$ subtree-monotonic w.r.t.\ generation logic $g$ if for any $p, p' \in P$ with $p' \in \mathit{cand}_g(p)$,
\[
\neg\mathbb{I}_C(p) \implies \neg\mathbb{I}_C(p')
\]
In other words, if $C$ does not hold at $p$, it also does not hold on any place in its subtree. Note the contrapositive that if it holds at a descendant of $p$, it must also hold at $p$ itself.
\end{definition}

With respect to the generation logic of the example tree $T_\mathrm{ex}$, the set of constraints $C_\mathrm{ex} = \{p \mapsto a \notin {\bullet}p,\, p \mapsto a \notin p{\bullet}\}$, i.e., $a$ neither occurs in the preset nor postset of a place $p$, is subtree-monotonic.

A set of constraints $C$ induces a restriction on a generation logic $g$ as follows. For $p \in P$,
\[
g_C(p) \coloneqq g(p)\upharpoonright_{\mathbb{I}_C}.
\]
That is, children not meeting the constraints are filtered out. This point-wise restriction of the child generation logic suffices for respecting the constraints.

\begin{theorem}[Constrained Child Generation]
\label{thm:constrained-child-generation}
Let $C$ be a subtree-monotonic set of constraints with respect to a child generation logic $g$. For any $p \in P$, it holds that
\[
\mathit{cand}_g(p)\upharpoonright_{\mathbb{I}_C} =
\begin{cases}
\mathit{cand}_{g_C}(p) & \text{if } \mathbb{I}_C(p) \text{ holds,} \\
\emptyset & \text{otherwise.}
\end{cases}
\]
\end{theorem}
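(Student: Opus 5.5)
Since the tree is finite and acyclic, I would prove the statement by structural induction on the subtree rooted at $p$, i.e., bottom-up from the leaves. By Definition~\ref{def:candidate-tree}, $P$ is finite and $g$ admits no cycles, so the relation ``$p'\in g(p)$'' is well-founded, and the recursion defining $\mathit{cand}_g$ terminates. The induction is therefore valid. The induction hypothesis is that the claimed identity holds for every child $p'\in g(p)$.

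\textbf{Case $\neg\mathbb{I}_C(p)$.} By definition $p\in\mathit{cand}_g(p)$, so every $p'\in\mathit{cand}_g(p)$ is a descendant of $p$ (or $p$ itself). Subtree-monotonicity (Definition~\ref{def:subtree-monotonicity}) then gives $\neg\mathbb{I}_C(p')$ for all such $p'$. Hence the filtered set $\mathit{cand}_g(p)\upharpoonright_{\mathbb{I}_C}$ is empty. This case needs no induction at all.

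\textbf{Case $\mathbb{I}_C(p)$.} I would unfold the recursion and use that filtering distributes over unions:
\[
\mathit{cand}_g(p)\upharpoonright_{\mathbb{I}_C} = \{p\} \cup \bigcup_{p'\in g(p)} \mathit{cand}_g(p')\upharpoonright_{\mathbb{I}_C}.
\]
The singleton $\{p\}$ survives the filter because $\mathbb{I}_C(p)$ holds. I then split the children $p'\in g(p)$ according to whether $\mathbb{I}_C(p')$ holds.
\begin{itemize}
\item If $\mathbb{I}_C(p')$ fails, the previous case makes its contribution empty.
\item If $\mathbb{I}_C(p')$ holds, the induction hypothesis turns its contribution into $\mathit{cand}_{g_C}(p')$.
\end{itemize}
The children satisfying $\mathbb{I}_C$ are exactly $g(p)\upharpoonright_{\mathbb{I}_C}=g_C(p)$. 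The right-hand side therefore equals $\{p\}\cup\bigcup_{p'\in g_C(p)}\mathit{cand}_{g_C}(p')=\mathit{cand}_{g_C}(p)$. This uses the same recursion for $g_C$, which is well-defined because $g_C(p)\subseteq g(p)$ inherits acyclicity.

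The main point needing care is justifying the induction itself. The induction is over the well-founded ``child-of'' order, which relies on acyclicity and finiteness; it is not over $|P|$. A secondary point is checking that $\mathit{cand}_{g_C}$ is well-defined even though $g_C$ might not give a complete tree. Beyond this, the argument is only bookkeeping of unions and filters.
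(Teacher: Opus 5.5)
Your proof is correct, but it is organized differently from the paper's. You treat the case $\neg\mathbb{I}_C(p)$ the same way the paper does. For the case $\mathbb{I}_C(p)$, the paper argues elementwise instead of by induction. It first notes $\mathit{cand}_{g_C}(p)\subseteq\mathit{cand}_g(p)$, because $g_C$ is a pointwise restriction of $g$. Then, for each descendant $\hat p$, it takes the path $p=p_1,\ldots,p_k=\hat p$, which it calls unique by appeal to the tree structure. If $\mathbb{I}_C(\hat p)$ holds, the contrapositive of subtree-monotonicity gives $\mathbb{I}_C(p_i)$ for every $i$, so the whole path survives in $g_C$. If $\mathbb{I}_C(\hat p)$ fails, then $\hat p\notin g_C(p_{k-1})$.

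You instead run well-founded induction along the recursion that defines $\mathit{cand}_g$. You use subtree-monotonicity only through your first case, applied to the children that fail the constraints, and then reassemble the union.

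Your version works directly with the recursive definition, so it never needs the paper's implicit characterization of $\mathit{cand}_g(p)$ as the set of places reachable from $p$ by $g$-paths. It also uses only finiteness and acyclicity, never the unique-parent condition, so it applies verbatim to generation logics that unfold into DAGs rather than trees. You also make explicit that $\mathit{cand}_{g_C}$ is well-defined, which the paper takes for granted. In exchange, the paper's path argument is more pictorial, matching its accompanying figure. It also isolates the key intuition that a place satisfying $C$ can have no ancestor violating $C$.
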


\Cref{thm:constrained-child-generation} states that the candidates meeting the constraints in the subtree rooted at $p$ are exactly the candidates that the restricted generation logic $g_C$ recursively generates. Simply said, a recursive traversal via $g_C$ does not miss any places meeting the constraints.

\begin{figure}[htbp]
  \centering
  \includegraphics[width=0.7\linewidth]{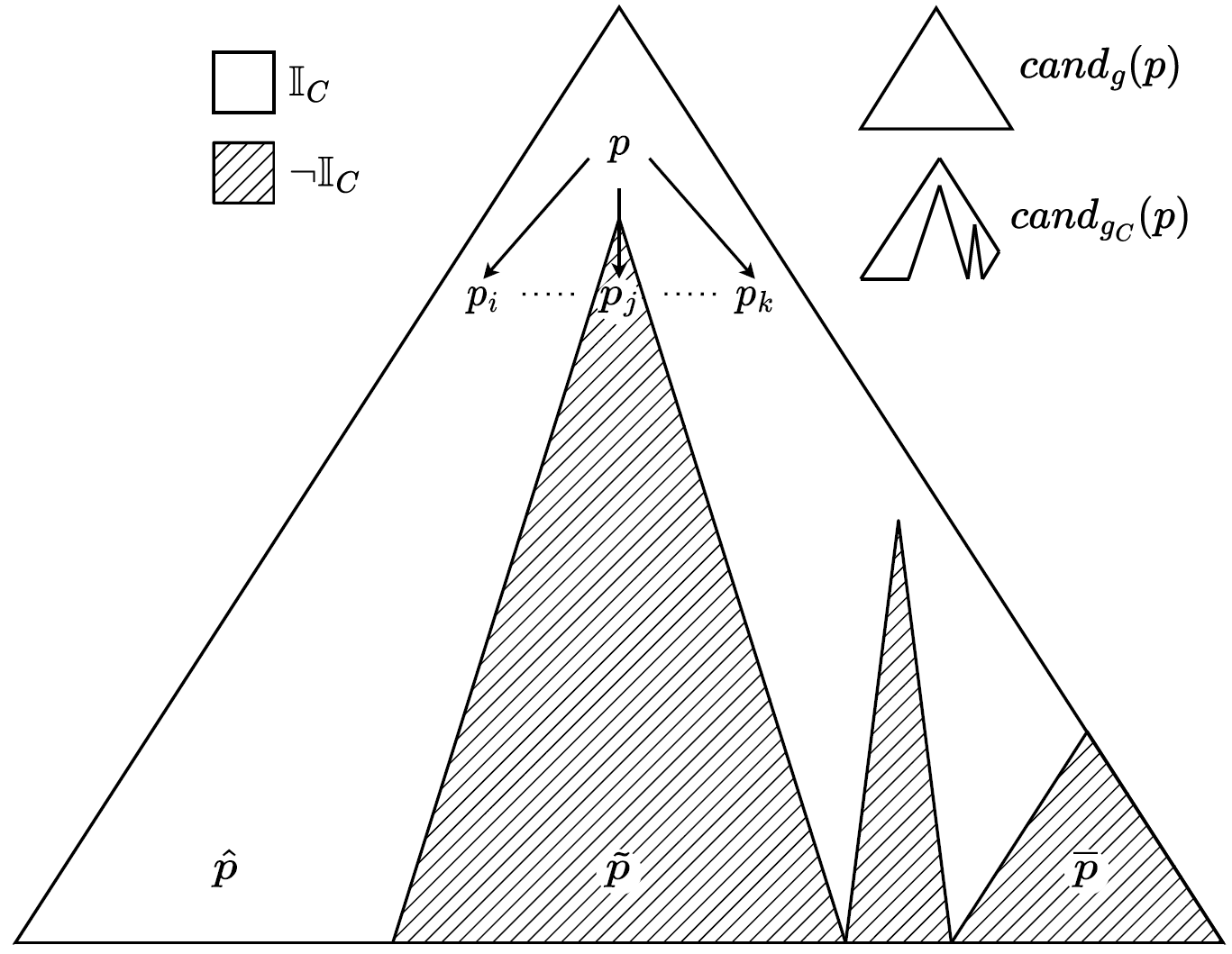}
  \caption{A schematic of how the subtree-monotonicity of a constraint set $C$ interacts with a child generation logic and resulting candidate set rooted at a place $p$. A place like $\hat{p}$ for which $\mathbb{I}_C$ holds, cannot have an ancestor for which it does not and is thus included in $\mathit{cand}_{g_C}(p)$. Places like $\tilde{p}$ or $\bar{p}$ for which $\mathbb{I}_C$ does not hold are either themselves directly filtered by $g_C$ or an ancestor of theirs is. Therefore, they are also not included in the generated candidate set $\mathit{cand}_{g_C}(p)$.}
  \label{fig:subtree-monotonicity}
\end{figure}

\begin{proof}
Let $p \in P$ be a place, $g$ a child generation logic and $C$ a subtree-monotonic set of constraints. If $\mathbb{I}_C(p)$ does not hold, $\mathit{cand}_g(p)\upharpoonright_{\mathbb{I}_C} = \{p' \in \mathit{cand}_g(p) \mid \mathbb{I}_C(p') \text{ holds}\} = \emptyset$ immediately follows from subtree-monotonicity.

So, assume $\mathbb{I}_C(p)$ holds. \Cref{fig:subtree-monotonicity} provides an illustration of the following proof. As $g_C$ is a point-wise restriction of $g$, i.e., $\forall p \in P : g_C(p) \subseteq g(p)$, we have $\mathit{cand}_{g_C}(p) \subseteq \mathit{cand}_g(C)$. Trivially, we also have $\mathit{cand}_g(p)\upharpoonright_{\mathbb{I}_C} \subseteq \mathit{cand}_g(p)$, so it suffices to show for any $\hat{p} \in \mathit{cand}_g(p) \setminus \{p\}$ that $\mathbb{I}_C(\hat{p}) \Leftrightarrow \hat{p} \in \mathit{cand}_{g_C}(p)$.

Let $\hat{p} \in \mathit{cand}_g(p) \setminus \{p\}$ be a descendant of $p$ and let $\langle p_1, \ldots, p_k \rangle \in P^*$, $k \in \mathbb{N}$ be the sequence (unique due to tree structure) of places from $p$ to $\hat{p}$, i.e., $p_1 = p$, $p_k = \hat{p}$ and $p_{i+1} \in g(p_i)$ for all $i \in \underline{k-1}$.

Assume $\mathbb{I}_C(\hat{p})$ holds. Due to the contrapositive of subtree-monotonicity, it also holds for all $p_i$ on that path, and thus $p_{i+1} \in g_C(p_i)$ for all $i \in \underline{k-1}$. We therefore get $\hat{p} \in \mathit{cand}_{g_C}(p)$.

Assume $\mathbb{I}_C(\hat{p})$ does not hold. Then, in particular, $\hat{p}$ is not generated by the restricted generation logic $g_C$ by its parent, i.e., $\hat{p} \notin g_C(p_{k-1})$. So, we have $\hat{p} \notin \mathit{cand}_{g_C}(p)$. 
\end{proof}

To allow general prioritized expansion, instead of hard-coding breadth- or depth-first traversal, we need to keep some additional state. The set of currently not fully expanded nodes as well as an indicator which of their children were already generated is necessary. Additionally, we keep a set of currently active subtree-monotonic constraints.

\begin{definition}[Tree State]
\label{def:tree-state}
A tree state $TS = (S, C, \nu)$ of a tree $T = (g, r)$ is characterized by a set of active nodes $S \subseteq P$, a set of constraints $C$ and a function $\nu : S \to P$ with $\forall s \in S : \nu(s) \subseteq g(s)$, indicating the already visited children of the active nodes.
\end{definition}

The canonical initial state of a candidate tree $T = (g, r)$ is $(\{r\}, \emptyset, \{(r, \emptyset)\})$. The expansion can now be guided by an expansion strategy that selects the next active node of a tree state to be expanded. Note that the overall candidate ordering is still mainly dependent on the tree structure, as children cannot be traversed before their parents.

\begin{definition}[Expansion Strategy]
\label{def:expansion-strategy}
An expansion strategy is a function $n : \mathcal{P}(P) \to P$ with $\forall \emptyset \neq S \subseteq P : n(S) \in S$ that chooses one place out of a non-empty set of places.
\end{definition}

A tree state can be pruned by removing active nodes (and their mapping) in case their constraint-filtered children are exhausted, i.e., they become inactive. This is highly dependent on the expansion strategy. In the worst case, the tree state can grow to the maximum tree width, as is the case with breadth-first traversal. Depth-first traversal only needs to keep track of one path from the root node to the current node, and is thus bounded by the tree depth. This makes it the most memory efficient traversal.

Another popular class of those strategies are heuristics, i.e., using a scoring function $h : P \to \mathbb{R}$. Without loss of generality, we assume a preference for smaller values. Then $h$ induces the strategy $n_h(S) = \operatorname{argmin}_{p \in S} h(p)$ for $\emptyset \neq S \subseteq P$. We call a $h$ subtree-monotonic w.r.t.\ generation logic $g$ if $\forall p, p' \in P : p' \in \mathit{cand}_g(p) \implies h(p) \leq h(p')$ holds. This is essentially a natural generalization of the constraint monotonicity from boolean implication to an ordering relation on the range of the heuristic. For example, the heuristic for place size, $h_\mathrm{ex} : p \mapsto |{\bullet}p \cup p{\bullet}|$ is subtree-monotonic w.r.t.\ the example $T_\mathrm{ex}$ (from \Cref{fig:candidate-tree-example}).

Subtree-monotonic heuristics also induce subtree-monotonic threshold constraints, e.g., for a heuristic $h$ and threshold $t \in \mathbb{R}$, $c_{h,t}(p) :\Leftrightarrow h(p) \leq t$ for $p \in P$. Note that, while we focus here on subtree-monotonic ordering and constraining strategies to guarantee completeness, more general strategies are possible and supported in our implementation. Dropping the completeness notion and letting greediness run wild can be empirically valid and is heavily employed on the composition side.

There is also another prioritization opportunity within the expansion step. Namely, the child ordering $m : P \to \bigcup_{S \subseteq P} \mathrm{Perm}(S)$ w.r.t.\ the generation logic $g$ that assigns to each place a permutation of its children under $g$, i.e., $m(p) = \pi \in \mathrm{Perm}(g(p))$ for $p \in P$. We use a static (place independent) child node permutation rule but it could be more sophisticated than that. For example, children with a heuristically determined high potential to improve the current result could be explored first.

First, the next node to expand is picked from the active set according to the expansion strategy. Then, the constrained set of potential unvisited children is computed. If it is empty, we recurse after removing this exhausted place. Otherwise, the first place according to the child ordering is added to the active set and returned.

With all of the previous definitions out of the way, we can define the tree expansion procedure in \Cref{alg:expand}. Expansion fails (returns $\bot$) if there are no more active nodes.

\begin{algorithm}[htbp]
\caption{\textsc{expand}}
\label{alg:expand}
\begin{algorithmic}[1]
\Require generation logic $g$
\Require expansion strategy $n$
\Require child ordering $m$
\Require tree state $TS = (S, C, \nu)$
\Require $S \neq \emptyset$
\Procedure{expand}{$TS$}
    \State $p \gets n(S)$
    \State $N \gets g_C(p) \setminus \nu(p)$
    \If{$N = \emptyset$} \Comment{pruning if exhausted due to constraints}
        \If{$|S| > 1$}
            \State \Return $\textsc{expand}((S \setminus \{p\}, C, \nu\upharpoonright_{S \setminus \{p\}}))$
        \Else
            \State \Return $\bot$
        \EndIf
    \Else
        \State $\pi \gets m(p)$
        \State $p' \gets \operatorname{argmin}_{\hat{p} \in N} \pi(\hat{p})$
        \State $\nu(p) \gets \nu(p) \cup \{p'\}$
        \If{$\nu(p) \supseteq g_C(p)$} \Comment{pruning if this was $p$'s last child}
            \State $S \gets S \setminus \{p\}$
            \State $\nu \gets \nu\upharpoonright_S$
        \EndIf
        \State $S \gets S \cup \{p'\}$
        \State $\nu \gets \nu \cup \{(p', \emptyset)\}$
        \State \Return $p'$
    \EndIf
\EndProcedure
\end{algorithmic}
\end{algorithm}

The tree expansion is called within the proposal routine \Cref{alg:propose} to generate all constraint-satisfying candidates. Whenever a place is produced, it is handed over to the composition side which we describe in \Cref{sec:composition}. Due to the acyclicity (tree-structure) of the child node generation function $g$ and children bookkeeping, \textsc{expand} never returns the same place twice when initialized with just the root node as an active node. This actually holds in general for sets of active nodes $S$ with disjoint subtrees, i.e., $\forall p \neq p' \in S : p \notin \mathit{cand}_g(p')$. Consequently, \textsc{propose} will necessarily terminate after a finite number of steps.

\begin{algorithm}[htbp]
\caption{\textsc{propose}}
\label{alg:propose}
\begin{algorithmic}[1]
\Require efficient tree $T = (g, r)$
\Require initial constraints $C$
\Procedure{propose}{}
    \State $TS \gets (\{r\}, C, \{(r, \emptyset)\})$
    \State $p \gets \begin{cases} r & \text{if } \mathbb{I}_C(r), \\ \bot & \text{otherwise.} \end{cases}$
    \While{$p \neq \bot$}
        \State \textbf{yield} $p$
        \State \textbf{update} $C$
        \State $p \gets \textsc{expand}(TS)$
    \EndWhile
\EndProcedure
\end{algorithmic}
\end{algorithm}

After each candidate proposal, we allow the constraint set to change. The idea is that new constraints may be generated in response to the candidate. We require that the set of constraints is only ever updated in a monotone fashion.

\begin{definition}[Monotonic Constraint Strengthening]
\label{def:monotonic-strengthening}
Let $C, C'$ be two subtree-monotonic constraint sets. $C'$ is a \emph{monotonic strengthening} of $C$ if and only if $\forall p \in P : \neg\mathbb{I}_C(p) \implies \neg\mathbb{I}_{C'}(p)$ holds, i.e., places for which $C$ does not hold, $C'$ does neither.
\end{definition}

Most trivially this is the case if $C \subseteq C'$. A sequence of constraint sets is monotonic if every two directly following constraint sets are monotonic strengthenings. This follows from the transitivity of $\implies$. For example, the sequence of constraint sets $\{\}$, $\{p \mapsto a \notin {\bullet}p \cap p{\bullet}\}$ ($a$ is not a self loop of $p$), $\{p \mapsto a \notin {\bullet}p\}$ ($a$ is not in the preset of $p$), $\{p \mapsto a \notin {\bullet}p \cup p{\bullet}\}$ ($a$ is neither in the preset nor postset of $p$) is monotonic.

We want to note here that there is a difference between the formal constraints and their representation. Very local types of constraints only affect the subtree rooted in a single place and therefore have to be checked only once. Some constraints ``touch'' many subtrees and have to be constantly re-evaluated while other types of constraints can be efficiently summarized and thus incorporated with marginal impact on performance. This is an important aspect to consider on both the conceptual as well as implementation side.

If we enforce the aforementioned monotonicity properties, we can guarantee a complete candidate traversal w.r.t.\ the final set of constraints $C$. If we further require the sequence of constraint sets to never exclude a previously returned candidate, the traversal is even minimal, i.e., we never consider a candidate that would later be filtered by new constraints. This is in particular the case for the local constraints mentioned above if they can be generated in response to the place they are local to.

\begin{proposition}
\label{prop:completeness-minimality}
Assuming constraint set updates are monotonic strengthenings, the set of returned places of the proposal routine \Cref{alg:propose}, denoted here by $M$, contains all places for which all constraints in the final constraint set $C$ hold, i.e., $M \supseteq P\upharpoonright_{\mathbb{I}_C}$ (completeness). Further, assuming an updated constraint set never excludes previously returned places, no more than those places are returned, i.e., $M = P\upharpoonright_{\mathbb{I}_C}$ (minimality).
\end{proposition}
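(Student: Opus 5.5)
We need to prove completeness and minimality of \textsc{propose}. Write $C_0 \,(=C), C_1, \ldots, C_T$ for the sequence of constraint sets during the run, with $C_T$ the final set. The approach is an invariant argument over the run of \Cref{alg:propose}, reducing completeness to \Cref{thm:constrained-child-generation}.

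\textbf{Completeness.} Fix $\hat p \in P\upharpoonright_{\mathbb{I}_{C_T}}$. By tree completeness (the relaxation covers the through places), $\hat p \in \mathit{cand}_g(r)$, so there is a unique path $\langle p_1=r,\ldots,p_k=\hat p\rangle$. By subtree-monotonicity of $C_T$, $\mathbb{I}_{C_T}(p_i)$ holds for all $i$. Because the updates are monotonic strengthenings and $\implies$ is transitive, $\mathbb{I}_{C_j}(p_i)$ also holds for every $j \le T$. In other words, no intermediate constraint set ever filtered a node on this path.

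I then show by induction on $i$ that each $p_i$ is yielded. For $i=1$, $\mathbb{I}_{C_0}(r)$ holds, so $r$ is yielded. For the step, suppose $p_i$ was yielded. After being returned by \textsc{expand}, $p_i$ is inserted into $S$ with $\nu(p_i)=\emptyset$. It is removed from $S$ only in two cases: when $g_{C_j}(p_i)\setminus\nu(p_i)=\emptyset$, or when $\nu(p_i)\supseteq g_{C_j}(p_i)$, for the constraint set $C_j$ current at that moment. Since $p_{i+1}\in g(p_i)$ and $\mathbb{I}_{C_j}(p_{i+1})$ holds, we have $p_{i+1}\in g_{C_j}(p_i)$. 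So removal forces $p_{i+1}\in\nu(p_i)$, and only children that are actually returned are ever added to $\nu$. Hence $p_{i+1}$ is returned before $p_i$ leaves $S$.

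The loop terminates, and at termination $S$ is empty (\textsc{expand} returns $\bot$ only when the last active node is exhausted). So every node that was ever active had its children pruned in this way, and $p_{i+1}$ was yielded. Finiteness comes from the no-repetition remark: each place is returned at most once, by the tree structure of $g$.

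\textbf{Minimality.} Any returned $p'$ satisfied $\mathbb{I}_{C_j}(p')$ when it was returned, because it was drawn from $g_{C_j}$, or it is $r$ with $\mathbb{I}_{C_0}(r)$. The second hypothesis says that later sets never exclude previously returned places, so $\mathbb{I}_{C_T}(p')$ holds. This gives $M\subseteq P\upharpoonright_{\mathbb{I}_{C_T}}$, and with completeness, equality.

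\textbf{Main obstacle.} The delicate step is the removal bookkeeping in the completeness induction. Constraints change between calls, and a node may be dropped from $S$ at line ``$\nu(p)\supseteq g_C(p)$'' under an older, weaker constraint set. I will argue that weaker sets have larger $g_C(p)$, so dropping under an older set still covers all children admissible under $C_T$. This is precisely where the strengthening assumption is used.
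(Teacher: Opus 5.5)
Your proof is correct and uses the same idea as the paper. Every node on the root-to-$\hat p$ path satisfies every intermediate $C_t$ (subtree-monotonicity of $C_T$ plus the strengthening chain), so nothing on that path is ever filtered, and minimality follows directly from the non-exclusion hypothesis. The paper argues completeness by contradiction and takes for granted that an unfiltered place is eventually reached, whereas your forward induction over the removal conditions on $S$ and $\nu$ in \textsc{expand} makes that step explicit; note only that $\hat p\in\mathit{cand}_g(r)$ needs strict completeness $\mathit{cand}_g(r)=P$ (or an initial constraint excluding $P\setminus\mathit{cand}_g(r)$), not just the relaxation to through places.
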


\begin{proof}
Let $(g, r)$ be an efficient candidate tree and $C_0$ be the initial constraints. Assume \textsc{propose} terminates after $T \in \mathbb{N}_0$ executions of the loop (steps). For $T = 0$, we never return anything and as $C_0$ is subtree-monotonic, $\neg\mathbb{I}_{C_0}(r) \implies P\upharpoonright_{\mathbb{I}_{C_0}} = \emptyset$. Let $M_t$ be the set of returned candidates and $C_t$ be updated constraint set after the $t$-th step for $t \in \underline{T}$.

We first show completeness by contradiction. Assume $p \in P\upharpoonright_{\mathbb{I}_{C_T}} \setminus M_T$ exists. As the constraint sequence is monotonic, the set of places meeting all constraints can only shrink, i.e., $\forall t < T : P\upharpoonright_{\mathbb{I}_{C_t}} \supseteq P\upharpoonright_{\mathbb{I}_{C_{t+1}}}$. So, for all $t \in \underline{T}$, $p \in P\upharpoonright_{\mathbb{I}_{C_t}}$ holds. It can thus have never been filtered directly by the restriction of the child generation logic $g_{C_t}(p) = g(p)\upharpoonright_{\mathbb{I}_{C_t}}$ at any timestep $t$. $p$ also cannot be the root $r$ as that is returned exactly if $\mathbb{I}_{C_0}(r)$ holds. Then assume an ancestor $p'$, i.e., $p \in \mathit{cand}_g(p')$, was filtered at step $0 \leq k \leq T$ instead. So, $\neg\mathbb{I}_{C_k}(p')$. However, as $C_k$ is subtree-monotonic, this directly implies $\neg\mathbb{I}_{C_k}(p)$, i.e., $p \notin P\upharpoonright_{\mathbb{I}_{C_k}}$. A contradiction.

We show minimality $M_t \subseteq P\upharpoonright_{\mathbb{I}_{C_{t-1}}}$ via induction over the timesteps. For $t = 1$, $M_1 = \{r\} \subseteq P\upharpoonright_{\mathbb{I}_{C_0}}$ holds as we enter the loop the first time, i.e., $\mathbb{I}_{C_0}(r) = \mathrm{True}$. For $t + 1 > 1$, $M_{t+1} = M_t \cup \{p\}$ where $p$ is the result of the \textsc{expand} call from the previous iteration. By construction, we have $\mathbb{I}_{C_t}(p) = \mathbb{I}_{C_{t-1}}(M_t) = \mathrm{True}$. If $C_t$ does not constrain any place in $M_t$, i.e., $\mathbb{I}_{C_t}(M_t)$ holds, then it directly follows that $\mathbb{I}_{C_t}(M_t \cup \{p\}) = \mathrm{True}$. So, $M_{t+1}$ is still a subset of all places meeting the updated constraints, i.e., $M_{t+1} \subseteq P\upharpoonright_{\mathbb{I}_{C_t}}$.

At the end after $T$ steps, we thus inductively get $M_T \subseteq P\upharpoonright_{\mathbb{I}_{C_{T-1}}}$. Finally, as the constraint sequence is monotonic and the algorithm terminated, there were no more unseen candidates left under $C_T$. Together with the assumption that it does not constrain an already returned candidate, we have exactly $M_T = P\upharpoonright_{\mathbb{I}_{C_T}}$, i.e., the set of proposed places is complete and minimal. 
\end{proof}

This motivates choosing a good constraining strategy that generates constraints early enough for the selected generation logic. For example, consider the constraint sequence $C_0 = \emptyset, C_1 = \ldots = C_T = \{p \mapsto a \notin p{\bullet}\}$ (where \textsc{propose} terminated after $T$ steps), i.e., the constraint that $a$ is not in the postset of a place $p$ is added after the first proposed place. On a breadth-first traversal of the partial example tree $T_\mathrm{ex}$, the left subtree (at $p = (\{a\}, \{a\})$) would be completely pruned.

\subsubsection{Concrete Child Generation Logic}
\label{sec:concrete-child-generation}

In this section, we give our instantiation of a child generation logic. As the number of activities is the exponential factor in the runtime of this class of algorithms, we want to provide more fine-grained control over the selection of relevant activities. For this section, we thus fix a set of considered preset activities $\emptyset \neq A_\mathrm{pre} \subseteq A$ and postset activities $\emptyset \neq A_\mathrm{post} \subseteq A$, making the set of possible places $P = \mathcal{P}(A_\mathrm{pre}) \times \mathcal{P}(A_\mathrm{post})$. Among these, only the through places $P^! = (\mathcal{P}(A_\mathrm{pre}) \setminus \{\emptyset\}) \times (\mathcal{P}(A_\mathrm{post}) \setminus \{\emptyset\})$ are interesting. This allows us to naturally exclude places with the designated start/end activities $\eactivity$ in their preset and $\sactivity$ in their postset which can never have any fitting behavior.

The candidate tree is rooted at the empty place, i.e., $r = (\emptyset, \emptyset)$. It is grown by extending the preset and postset until all considered activities are covered. That makes the depth of a place $p \in P$ equal to the number of connected arcs, $|{\bullet}p| + |p{\bullet}|$. This intuitive measure of place simplicity thus becomes a first class property in the emerging candidate ordering.

\begin{definition}[Incremental Pre- \& Postset Expansions]
\label{def:incremental-expansions}
Let $p \in P$ be a place and $<_\mathrm{pre}$, $<_\mathrm{post}$ be strict total orderings on $A_\mathrm{pre}$ and $A_\mathrm{post}$ respectively. An incremental preset (postset) expansion of $p = (I, O)$ is an extension of $p$ with one activity added to its preset (postset) which is larger than previously contained activities according to $<_\mathrm{pre}$ (or $<_\mathrm{post}$ respectively). Formally,
\begin{align*}
\mathit{pre}^+(p) &= \{(I \cup \{a\}, O) \mid a \in A_\mathrm{pre},\, \forall a' \in I : a' <_\mathrm{pre} a\} \\
\mathit{post}^+(p) &= \{(I, O \cup \{a\}) \mid a \in A_\mathrm{post},\, \forall a' \in O : a' <_\mathrm{post} a\}
\end{align*}
\end{definition}

Note that if $I = \emptyset$ (or $O = \emptyset$, respectively), all considered activities meet the ordering constraint which is intuitively correct, as any element is then a valid extension of the empty set. The concrete child generation logic then simply concatenates these expansions under a condition on postset size to preserve the one-parent tree structure. Without it, we would have non-unique parents, as a through place is both a preset and a postset expansion of two different places. The choice of the postset size condition is related to pruning efficiency but could symmetrically be on the preset. For $p = (I, O) \in P$,
\[
g(p) \coloneqq
\begin{cases}
\mathit{post}^+(p) & \text{if } I = O = \emptyset, \\
\mathit{pre}^+(p) & \text{if } |I| + |O| = 1, \\
\mathit{post}^+(p) \cup \mathit{pre}^+(p) & \text{if } |I| \geq 1 \wedge |O| = 1, \\
\mathit{post}^+(p) & \text{if } |I| \geq 1 \wedge |O| > 1
\end{cases}
\]
These cases are disjoint and well defined on all places eventually generated by $g$ starting at the empty root $r = (\emptyset, \emptyset)$. It is important to note here that $g$ does not generate non-through places except for in the first two layers, where all places have either ${\bullet}p = \emptyset$ or $p{\bullet} = \emptyset$. This can also be modeled as an initial constraint $c(p) :\Leftrightarrow ({\bullet}p = p{\bullet} = \emptyset) \vee ({\bullet}p = \emptyset \wedge |p{\bullet}| = 1) \vee ({\bullet}p \neq \emptyset \wedge p{\bullet} \neq \emptyset)$ that exactly excludes $P \setminus \mathit{cand}_g(r)$, leaving $\mathit{cand}_g(r) = P^! \cup \{r\} \cup \mathit{post}^+(r)$.

Our child ordering permutation $m$ merely lifts $<_\mathrm{pre}$ and $<_\mathrm{post}$ to the extensions constructed with them. That is, let $m(p) = \pi$, then $\pi((I \cup \{a\}, O)) < \pi((I \cup \{a'\}, O)) :\Leftrightarrow a <_\mathrm{pre} a'$ and $\pi((I, O \cup \{a\})) < \pi((I, O \cup \{a'\})) :\Leftrightarrow a <_\mathrm{post} a'$ hold. Further, postset expansions are ordered before preset expansions, i.e., $\forall p_I \in \mathit{pre}^+(p), p_O \in \mathit{post}^+(p) : \pi(p_O) < \pi(p_I)$, because they can lead to earlier pruning. This is related to the choice of the condition on the postset size above.

An example candidate subtree according to the instantiations given in this section is presented in \Cref{fig:candidate-subtree}. It is rooted at $(\{\sactivity\}, \{a\})$, so it starts at depth two. Blue edges indicate postset expansions, red ones preset expansions. The activity orderings used in $g$ are $\sactivity <_\mathrm{pre} a <_\mathrm{pre} b <_\mathrm{pre} c$ and $a <_\mathrm{post} b <_\mathrm{post} c <_\mathrm{post} \eactivity$. Note that the selected sets of possible activities for the preset $A_\mathrm{pre}$ and postset $A_\mathrm{post}$ differ. Any place $p$ with $\eactivity \in {\bullet}p$ and $\sactivity \in p{\bullet}$ can never have any fitting behavior, so they can readily be excluded from the candidate tree. This example subtree illustrates the emerging asymmetrical structure with the earlier and more homogeneous postset expansion subtrees. The over/underfedness constraints can cut off child subtrees in which only one of the two colors occurs. So, this postset (blue) first structure prioritizes underfedness. Constraint generation enabling pruning is enabled by candidate evaluation and performed within the composition logic.

\begin{figure}[p]
  \centering
  \rotatebox{90}{%
    \begin{minipage}{\textheight}
      \centering
      \includegraphics[
        width=\linewidth,
        height=\textwidth,
        keepaspectratio
      ]{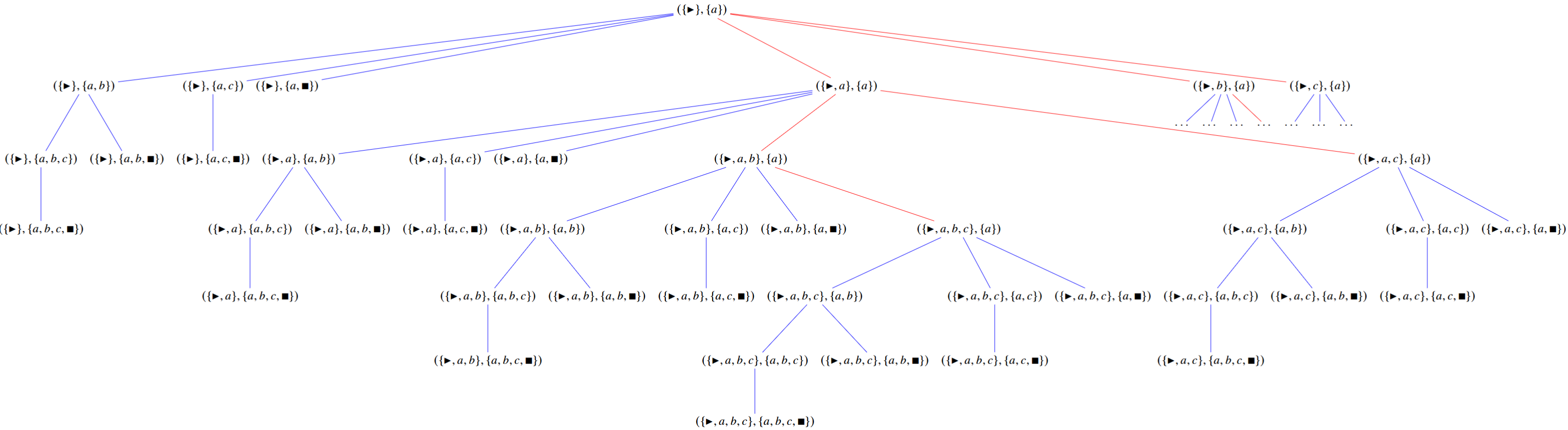}

      \captionof{figure}{An example of the candidate subtree rooted at $(\{\sactivity\}, \{a\})$ generated by our given child generation logic $g$, child permutation $m$ and activity orderings $\sactivity <_\mathrm{pre} a <_\mathrm{pre} b <_\mathrm{pre} c$ and $a <_\mathrm{post} b <_\mathrm{post} c <_\mathrm{post} \eactivity$. Blue edges indicate postset expansions, red ones preset expansions. The emerging structure has more homogeneous postset expansion subtrees which are also ordered earlier (according to breadth-first traversal).}
      \label{fig:candidate-subtree}
    \end{minipage}%
  }
\end{figure}

\subsection{Evaluation}
\label{sec:evaluation}

Any proposed candidate place is evaluated with arbitrary metrics to provide a basis for the composition logic. We divide the evaluations of candidates into \emph{individual} and \emph{relative}. Individual evaluations are defined on individual places, i.e., $\mathit{eval}^\mathit{ind} : \mathbb{P} \to E^\mathit{ind}$ with evaluation co-domain $E^\mathit{ind}$, whereas relative evaluations are always relative to a given set of places, i.e., $\mathit{eval}^\mathit{rel} : \mathbb{P} \times \mathcal{P}(\mathbb{P}) \to E^\mathit{rel}$ with evaluation co-domain $E^\mathit{rel}$. Individual evaluations are computable locally, i.e., independently of the previously proposed candidates. They may use static parameters, such as the input event log, however. Typical for this type are log-fitness evaluations (based on place fitness from \Cref{def:fitness}) and generally place-local heuristics.

Relative evaluations on the other hand explicitly consider a candidate place together with an intermediate result (set of places). This allows testing model-level/global properties, such as deadlock-freedom of the resulting Petri net or place implicitness. These functions tend to be more expensive to compute due to being defined on the (place) set level, as they may scale super-linearly or even exponentially in the number of elements.

\subsection{Concrete Evaluations}
\label{sec:concrete-evaluations}

In the following, we present two types of concrete evaluations used in our instantiation of the framework: place fitness and place implicitness.

\paragraph{Place Fitness}

Recall the definition of $\square_\sigma(p)$ on individual places and place sets (\Cref{def:fitness}). If we want the resulting set of places to be fitting for the entire input log, we can translate that into the constraint that every individual place needs to be fitting for the entire log. Closely related to the tree structure we gave before is the notion of under- and overfedness extensively discussed by van der Aalst in~\cite{vanderAalst2018DiscoveringGlue}.

\begin{definition}[Under- \& Overfedness]
\label{def:underfedness}
Let $p \in \mathbb{P}$ be a place and $\sigma \in \mathbb{B}$ a behavior. We call $p$ \emph{overfed} (\emph{underfed}) on $\sigma$ if and only if $\triangle_\sigma(p)$ (or $\triangledown_\sigma(p)$, respectively), with
\begin{align*}
\triangle_\sigma(p) &:\Leftrightarrow \neg\,\text{\textsc{balanced}}(\sigma, p) \\
\triangledown_\sigma(p) &:\Leftrightarrow \neg\,\text{\textsc{non-neg}}(\sigma, p)
\end{align*}
\end{definition}

Note that these properties are not mutually exclusive. They are also not entirely symmetric. Overfedness corresponds to a non-empty marking at the end of token-based replay, i.e., an excess of token production, and underfedness corresponds to a negative marking during token-based replay, i.e., a deficiency in token production. The relation to fitness $\square_\sigma(p)$ is trivially $\square_\sigma(p) \Leftrightarrow \neg\triangle_\sigma(p) \wedge \neg\triangledown_\sigma(p)$.

For place evaluation, we are interested in the event log-level. Luckily, behavior-level properties can be lifted to multisets of behaviors via aggregation by counting.

\begin{definition}[Multiset-level Lifting]
\label{def:multiset-lifting}
For a non-empty event log $[\,] \neq L \in \mathcal{B}(\mathbb{B})$, a place $p \in \mathbb{P}$ and a behavior-level property $R : \mathbb{B} \times \mathbb{P} \to \{\mathrm{True}, \mathrm{False}\}$ on places and behaviors we can count the fraction of behaviors it holds on as follows.
\[
\#^R_{L,p} = \frac{|[\sigma \in L \mid R(\sigma, p) = \mathrm{True}]|}{|L|}
\]
\end{definition}

This also lifts behavior-level monotonicity to the log-level.

\begin{theorem}[Lifted Monotonicity]
\label{thm:lifted-monotonicity}
Let $[\,] \neq L \in \mathcal{B}(\mathbb{B})$ be a non-empty event log, $p \in \mathbb{P}$ be a place, and $R : \mathbb{B} \times \mathbb{P} \to \{\mathrm{True}, \mathrm{False}\}$ be a behavior-level property. For any $p' \in \mathbb{P}$ where $R(\sigma, p) \implies R(\sigma, p')$ holds for all $\sigma \in L$, the following holds.
\[
\#^R_{L,p} \leq \#^R_{L,p'}
\]
\end{theorem}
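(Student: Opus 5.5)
The plan is to compare the two counted sub-multisets directly; the denominators coincide, so everything reduces to comparing numerators. Both $\#^R_{L,p}$ and $\#^R_{L,p'}$ are fractions over the same denominator $|L|>0$ (the log is non-empty), so it suffices to show
\[
|[\sigma \in L \mid R(\sigma, p) = \mathrm{True}]| \leq |[\sigma \in L \mid R(\sigma, p') = \mathrm{True}]|.
\]
Dividing both sides by the positive constant $|L|$ then preserves the inequality and gives the claim.

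For the numerator inequality I will argue at the level of multisets. Write $L = [\sigma_1^{f_1}, \ldots, \sigma_k^{f_k}]$ with distinct $\sigma_j \in \supp(L)$ and frequencies $f_j \in \mathbb{N}$. The filtered multiset $[\sigma \in L \mid R(\sigma,p)]$ keeps each $\sigma_j$ with its full frequency $f_j$ exactly when $R(\sigma_j,p)$ holds, and drops it otherwise. Its cardinality is therefore $\sum_{j : R(\sigma_j,p)} f_j$, and analogously for $p'$.

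The hypothesis states that $R(\sigma_j,p)$ implies $R(\sigma_j,p')$ for every $j$. Hence the index set $\{j \mid R(\sigma_j,p)\}$ is contained in $\{j \mid R(\sigma_j,p')\}$, and the filtered multiset for $p$ is a sub-multiset of the one for $p'$. Since all frequencies are nonnegative, summing $f_j$ over the smaller index set gives a value no larger than summing over the larger one.

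There is no real obstacle here. The only point needing care is to make precise that multiset filtering preserves multiplicities, so that the cardinality is the frequency-weighted count. I would also note that the hypothesis is only needed for $\sigma \in \supp(L)$, which is exactly what the statement provides.
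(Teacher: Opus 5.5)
Your proof is correct and takes the same route as the paper: the denominators coincide, and the hypothesis makes the sub-multiset of traces satisfying $R(\cdot,p)$ a sub-multiset of those satisfying $R(\cdot,p')$, so the numerators compare as claimed. Your frequency-weighted sum over $\supp(L)$ spells out the multiplicity bookkeeping that the paper's one-line argument leaves implicit; that argument also says ``denominator'' where it means the numerator.
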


\begin{proof}
Every behavior counted in the denominator of the left hand side of the inequality is also counted on the right, as $R(\sigma, p) \implies R(\sigma, p')$ for all $\sigma \in L$.
\end{proof}

In other words, if a behavior-level property $R$ is monotonic with respect to some ordering on places, e.g., the child generation function of our candidate tree, the fractions will also be monotonic. This will be the basis of our constraint generation which we introduce in \Cref{sec:concrete-constraint-generation}.

We define a fitness evaluator that counts the fractions of fitting, underfed, and overfed behaviors.
\begin{align*}
\mathit{eval}^\mathit{ind}_{\square,\triangledownscript,\triangle} : \mathbb{P} &\to [0,1] \times [0,1] \times [0,1] \\
p &\mapsto (\#^\square_{L,p},\, \#^{\triangledownscript}_{L,p},\, \#^\triangle_{L,p})
\end{align*}

\paragraph{Place Implicitness}

The implicitness evaluation checks whether a place is redundant, i.e., does not change the fitting behavior of an intermediate result. An additional place can of course at most reduce the fitting behavior, and never increase it. Adding an implicit place to the solution set would just increase the complexity of the resulting model, so it is beneficial to filter it out. The corresponding evaluator can look as follows.
\begin{align*}
\mathit{eval}^\mathit{rel}_\text{global implicitness} : \mathbb{P} \times \mathcal{P}(\mathbb{P}) &\to \{\mathrm{True}, \mathrm{False}\} \\
(p, S) &\mapsto \mathit{fit}(S) = \mathit{fit}(\{p\} \cup S)
\end{align*}
This property can be computed via linear programming based on Petri net theory~\cite{GarciaValles1999ImplicitPlaces}. However, this is relatively expensive. Given a fixed set of behaviors $B \subseteq \mathbb{B}$, e.g., those occurring within an input event log, we can check the implicitness somewhat heuristically by only considering whether $p$ is redundant on this sample of all possible behavior:
\begin{align*}
\mathit{eval}^\mathit{rel}_\text{replay implicitness} : \mathbb{P} \times \mathcal{P}(\mathbb{P}) &\to \{\mathrm{True}, \mathrm{False}\} \\
(p, S) &\mapsto \mathit{fit}(S) \cap B = \mathit{fit}(\{p\} \cup S) \cap B
\end{align*}
This implicit place filtering also has huge computational advantages as it restricts the solution set space, i.e., all combinations of possible places, dramatically and keeps the intermediate result small, thus speeding up all relative evaluations. Mannel et al.\ introduced this technique as ``replay-based implicit place removal''. For more details, refer to~\cite{Mannel2020RemovingImplicit}.

To facilitate efficient application of a suite of evaluators, it is up to the composing side to dynamically call them as needed. This allows for short-circuiting of composition logic, e.g., discarding a candidate place after it does not meet a particular threshold on a cheap-to-calculate heuristic. The following section introduces the generalized composition logic, and our greedy instantiation.

\subsection{Composition}
\label{sec:composition}

The handling of proposed candidates and the constraining strategy, i.e., the system by which to generate constraints, is implemented on the composing side. Constraints usually derive from desired properties of the possible solution sets and are discovered by evaluation of candidate places.

In general, our framework is intended to support arbitrary composition strategies that internally update a state, making use of the aforementioned evaluations. In particular, we see a lot of opportunities in mirroring the efficient candidate traversal of the proposal side. We give the general structure in \Cref{alg:general-composing}. The procedure \textsc{compose} is called after \textsc{propose} yields a place. The candidate place is incorporated into an internal state by \textsc{update-state}, then, if possible, candidate constraints are generated. Only after calling \textsc{get-result} is the state collapsed to a concrete resulting set of places. However, with existing approaches, we are limited to greedy or at most relaxed greedy approaches to curb the huge exponential blowup of candidate combinations. Essentially, the composition routine turns into an online algorithm that has to make local decisions and trust the candidate generation process with a ``good'' ordering of candidates. \Cref{alg:greedy-composing} shows this simplified interface implementation. Instead of an arbitrary state, we directly manage a result set. The core strategy, in addition to \textsc{generate-constraints} as before, now lies in \textsc{deliberate-acceptance} which determines the manner in which the evaluated candidate is incorporated into the intermediate result.

In the following, we describe the quintessential constraint generation strategy that fits nicely into this online setting. Secondly, we give an overview of composition state updating.

\begin{algorithm}[htbp]
\caption{\textsc{general composing}}
\label{alg:general-composing}
\begin{algorithmic}[1]
\Require current state $S \in SP$, with state-space $SP$
\Require individual evaluators $(\mathit{eval}^\mathit{ind}_k)_{k \in N^\mathit{ind}}$, $N^\mathit{ind} = \underline{n^\mathit{ind}}$
\Require relative evaluators $(\mathit{eval}^\mathit{rel}_k)_{k \in N^\mathit{rel}}$, $N^\mathit{rel} = \underline{n^\mathit{rel}}$
\Require subroutine \textsc{evaluate}
\Require subroutine \textsc{update-state}
\Require subroutine \textsc{generate-constraints}
\Require subroutine \textsc{generate-result}: $SP \to \mathcal{P}(\mathbb{P})$
\Procedure{compose}{$p$}
    \State $\mathit{evs} \gets \textsc{evaluate}(p,\, (\mathit{e}^\mathit{ind}_k)_{k \in N^\mathit{ind}},\, (\mathit{e}^\mathit{rel}_k)_{k \in N^\mathit{rel}}p)$
    \State \textsc{update-state}$(p, \mathit{evs})$
    \State $C \gets \textsc{generate-constraints}(p, \mathit{evs})$
    \State \textbf{yield} constraints $C$
\EndProcedure
\Procedure{get-result}{}
    \State \Return \textsc{generate-result}$(S)$
\EndProcedure
\end{algorithmic}
\end{algorithm}

\begin{algorithm}[htbp]
\caption{\textsc{greedy composing}}
\label{alg:greedy-composing}
\begin{algorithmic}[1]
\Require set of previously accepted places $I \subseteq \mathbb{P}$
\Require subroutine \textsc{deliberate-acceptance}
\Procedure{evaluate}{$p,\, (\mathit{eval}^\mathit{ind}_k)_{k \in N^\mathit{ind}},\, (\mathit{eval}^\mathit{rel}_k)_{k \in N^\mathit{rel}}$}
    \For{$k \in N^\mathit{ind}$}
        \State $e^\mathit{ind}_k \gets \mathit{eval}^\mathit{ind}_k(p)$
    \EndFor
    \For{$k \in N^\mathit{rel}$}
        \State $e^\mathit{rel}_k \gets \mathit{eval}^\mathit{rel}_k(p, I)$
    \EndFor
    \State \Return $(e^\mathit{ind}_k)_{k \in N^\mathit{ind}},\, (e^\mathit{rel}_k)_{k \in N^\mathit{rel}})$
\EndProcedure
\Procedure{update-state}{$p, \mathit{evs}$}
    \State $\mathit{decision} \gets \textsc{deliberate-acceptance}(p, \mathit{evs})$
    \If{$\mathit{decision} = \mathrm{Accept}$}
        \State $I \gets I \cup \{p\}$
    \ElsIf{$\mathit{decision} = \mathrm{ReplaceExisting}(\hat{p})$}
        \State $I \gets I \setminus \{\hat{p}\} \cup \{p\}$
    \ElsIf{$\mathit{decision} = \mathrm{Reject}$}
        \State $I \gets I$
    \EndIf
\EndProcedure
\Procedure{generate-result}{$I$}
    \State \Return $I$
\EndProcedure
\end{algorithmic}
\end{algorithm}

\subsubsection{Concrete Constraint Generation}
\label{sec:concrete-constraint-generation}

As we have already mentioned in the proposal and evaluation sections, simple fitness constraints drive the constraint generation. The final connection to make from the previously introduced under/overfedness notions is to the preset and postset expansions used in our child generation logic $g$, which we fix for the following considerations. We first state the central theorem which expresses how these fitness notions align to our tree structure.

\begin{theorem}[Under/Overfedness Monotonicity]
\label{thm:underfedness-monotonicity}
Let $p, p' \in \mathbb{P}$ be places such that $p'$ is a postset expansion of $p$, i.e., ${\bullet}p = {\bullet}p'$ and $p{\bullet} \subset p'{\bullet}$. For any behavior $\sigma \in \mathbb{B}$, it holds that $\triangledown_\sigma(p) \implies \triangledown_\sigma(p')$. This is analogous for $\triangle$ and preset expansions.
\end{theorem}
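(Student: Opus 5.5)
The plan is to unfold the definitions of $\triangledown$ and $\triangle$ from \Cref{def:underfedness} into the counting statements of \Cref{def:fitness}. For a place $q$, a behavior $\sigma=\langle a_1,\ldots,a_n\rangle$ and $k\in\underline{n}$, write $\mathit{in}_q(k)=|\{i\in\underline{k}\mid \sigma(i)\in{\bullet}q\}|$ and $\mathit{out}_q(k)=|\{i\in\underline{k}\mid \sigma(i)\in q{\bullet}\}|$. Both counts are monotone in the respective set: enlarging ${\bullet}q$ can only increase $\mathit{in}_q(k)$, and enlarging $q{\bullet}$ can only increase $\mathit{out}_q(k)$, for every $k$. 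Every step of the argument reduces to this observation.

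For the underfedness part, assume $\triangledown_\sigma(p)$. Then there is a witness $k\in\underline{n}$ with $\mathit{in}_p(k-1)<\mathit{out}_p(k)$. For the postset expansion $p'$ we have ${\bullet}p={\bullet}p'$, so $\mathit{in}_{p'}(k-1)=\mathit{in}_p(k-1)$. Since $p{\bullet}\subset p'{\bullet}$, we also have $\mathit{out}_{p'}(k)\geq\mathit{out}_p(k)$. Hence the same $k$ witnesses $\neg\text{\textsc{non-neg}}_k(\sigma,p')$, i.e.\ $\triangledown_\sigma(p')$. Nothing further is needed here.

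The overfedness part is the real obstacle, and I expect it to force a qualification of the statement. Read literally, $\triangle$ means only ``not balanced'', and that is not preserved under preset expansion. For example, take $\sigma=\langle\sactivity,b,\eactivity\rangle$, $p=(\{a\},\{b\})$ and $p'=(\{a,b\},\{b\})$. On $p$ the totals are $0\neq 1$, so $\triangle_\sigma(p)$ holds, yet $p'$ is balanced on $\sigma$. The imbalance can go in the direction that an added preset activity repairs.

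I would therefore prove the overfedness direction for the case where $p$ is not underfed on $\sigma$, which matches the intended reading of ``excess of token production'' stated after \Cref{def:underfedness}. In that case, $\text{\textsc{non-neg}}_n(\sigma,p)$ gives $\mathit{in}_p(n)\ge\mathit{in}_p(n-1)\ge\mathit{out}_p(n)$. Together with non-balancedness this yields the strict inequality $\mathit{in}_p(n)>\mathit{out}_p(n)$. A preset expansion keeps $\mathit{out}$ fixed and can only increase $\mathit{in}$. So $\mathit{in}_{p'}(n)>\mathit{out}_{p'}(n)$, hence $\triangle_\sigma(p')$.

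The same monotonicity also shows $\neg\triangledown_\sigma(p)\implies\neg\triangledown_\sigma(p')$ for preset expansions, since the left-hand counts of $\text{\textsc{non-neg}}_k$ only grow. Thus $p'$ is again ``overfed and not underfed''. Finally, I would remark that, combined with \Cref{thm:lifted-monotonicity}, this gives the log-level fractions needed for constraint generation.
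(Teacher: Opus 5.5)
Your underfedness argument is exactly the paper's justification made precise. The paper gives no formal proof, only the remark that an underfed place can only get worse when token consumers are added. Your fixed witness $k$, with the left-hand count of $\text{\textsc{non-neg}}_k$ unchanged and the right-hand count non-decreasing, is that remark.

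For the overfedness half the paper merely says ``analogous'', and you are right that this fails under its own definition $\triangle_\sigma(p) :\Leftrightarrow \neg\,\text{\textsc{balanced}}(\sigma,p)$. Your example $\sigma=\langle\sactivity,b,\eactivity\rangle$, $p=(\{a\},\{b\})$, $p'=(\{a,b\},\{b\})$ is a valid counterexample: the totals are $0\neq 1$ on $p$ but $1=1$ on $p'$. So that half of the statement is false as written, and your qualified version and its proof are correct.

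There are two refinements worth making. First, once derived, your argument only uses the strict surplus $|\{i\in\underline{n}\mid\sigma(i)\in{\bullet}p\}|>|\{i\in\underline{n}\mid\sigma(i)\in p{\bullet}\}|$. The cleanest repair is therefore to \emph{define} overfedness as this surplus, which is the paper's own informal reading (``an excess of token production''). Under that definition:
\begin{itemize}
\item preset expansions preserve overfedness with no side condition;
\item fitness is still equivalent to being neither underfed nor overfed, since non-negativity at $k=n$ rules out a deficit;
\item the bound $\#^\square_{L,p}\le 1-\#^\triangle_{L,p}$ still holds;
\item every trace that is overfed but not underfed has a surplus, so the new fraction dominates the one for your property and prunes at least as much.
\end{itemize}

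Second, your closing remark about constraint generation only holds once this change is propagated. The paper's overfedness lemma and its \textsc{prune-preset-subtree} rule count the literal $\triangle$, and with that they are unsound. For instance, let $\eactivity$ be maximal in $<_\mathrm{post}$. The depth-one place $(\emptyset,\{\eactivity\})$ is literally overfed on every trace, and the postset rule is blocked by its depth condition. The preset rule then prunes $(\{x\},\{\eactivity\})$, even though that place fits $L=[\langle\sactivity,x,\eactivity\rangle]$ perfectly, and this happens already for $\tau=1$.
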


In words, if $p$ is underfed (overfed) on $\sigma$, any postset (preset) expansion of $p$ is also underfed (overfed) on $\sigma$. Intuitively, a place that is underfed can only become more underfed by adding token consumers. As a concrete example, consider behavior $\sigma = \langle \sactivity, b, \eactivity \rangle$ and place $p = (\{a\}, \{b\})$. $p$ is underfed on $\sigma$ and any postset expansion of $p$, e.g., $(\{a\}, \{b, c\})$, must also be underfed on $\sigma$.

To be able to generate subtree-monotonic constraints from this, they have to be consistent on all places in the subtree rooted at some place $p$, i.e., the entirety of $\mathit{cand}_g(p)$.

\begin{lemma}[Underfedness Subtree-Monotonicity]
\label{lem:underfedness-subtree}
Let $p \in \mathbb{P}$ with $|{\bullet}p| + |p{\bullet}| \geq 2$ be a place of depth at least two and $p' \in \mathit{post}^+(p)$ be a direct postset expansion of it. For any $\sigma \in \mathbb{B}$, it holds that:
\[
\triangledown_\sigma(p) \implies \forall \hat{p} \in \mathit{cand}_g(p') : \triangledown_\sigma(\hat{p})
\]
\end{lemma}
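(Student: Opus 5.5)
The plan is to reduce the statement to \Cref{thm:underfedness-monotonicity}. It suffices to show that every $\hat{p} \in \mathit{cand}_g(p')$ is a postset expansion of $p$, i.e., ${\bullet}\hat{p} = {\bullet}p$ and $p{\bullet} \subset \hat{p}{\bullet}$. Once this holds, $\triangledown_\sigma(p)$ gives $\triangledown_\sigma(\hat{p})$ directly by that theorem.

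First I pin down the shape of $p$. Since $g$ is only well defined on places it eventually generates from $r = (\emptyset,\emptyset)$, I take $p \in \mathit{cand}_g(r)$. Recall that $\mathit{cand}_g(r) = P^! \cup \{r\} \cup \mathit{post}^+(r)$. The places $r$ and $\mathit{post}^+(r)$ have depth at most one, so $p$, having depth at least two, is a through place with $|{\bullet}p| \geq 1$ and $|p{\bullet}| \geq 1$. Hence $p' \in \mathit{post}^+(p)$ satisfies ${\bullet}p' = {\bullet}p \neq \emptyset$, $p{\bullet} \subset p'{\bullet}$, and $|p'{\bullet}| \geq 2$.

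The key step is an invariant: every place $q = (I,O)$ with $|I| \geq 1$ and $|O| > 1$ falls into the last case of $g$. Therefore $g(q) = \mathit{post}^+(q)$, and each child $(I, O \cup \{a\})$ again has $|I| \geq 1$, $|O \cup \{a\}| > 1$, the same preset $I$, and a strictly larger postset. By induction on the length of the unique path from $p'$ to $\hat{p}$ (equivalently, by unfolding the recursive definition of $\mathit{cand}_g(p')$), every $\hat{p} \in \mathit{cand}_g(p')$ satisfies ${\bullet}\hat{p} = {\bullet}p$ and $p'{\bullet} \subseteq \hat{p}{\bullet}$. Combined with $p{\bullet} \subset p'{\bullet}$, this gives $p{\bullet} \subset \hat{p}{\bullet}$, as required.

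The main obstacle is the boundary bookkeeping rather than the monotonicity itself. I must rule out that $p$ has an empty postset, where the case distinction of $g$ does not apply or would allow preset expansions. The depth assumption handles exactly this, since it excludes $r$ and the depth-one layer. I must also check that the depth-$\geq 2$ postset-size condition really forbids preset expansions below $p'$. This is why the lemma is stated for $p'$ rather than for $p$ itself: children of $p$ may still be preset expansions when $|p{\bullet}| = 1$, and those need not inherit underfedness.
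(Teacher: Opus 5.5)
Your proof is correct and follows the paper's argument. Since $|{\bullet}p'| \geq 1$ and $|p'{\bullet}| > 1$, the last case of $g$ generates only postset expansions below $p'$, so every $\hat{p} \in \mathit{cand}_g(p')$ is a postset expansion of $p$ and \Cref{thm:underfedness-monotonicity} applies; your explicit restriction to $p \in \mathit{cand}_g(r)$ and the induction along the path make rigorous what the paper's step ``depth at least two and thus $|p{\bullet}| \geq 1$'' takes for granted.
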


\begin{proof}
To see this, recall the case distinction in our definition of $g$. If $p'$ is a postset expansion of $p$ with depth at least two and thus $|p{\bullet}| \geq 1$, we have $|p'{\bullet}| > 1$. Which implies that there are no preset expansions in $\mathit{cand}_g(p')$ as we never shrink places. Formally, $\forall \hat{p} \in \mathit{cand}_g(p') : {\bullet}p = {\bullet}p' = {\bullet}\hat{p} \wedge p{\bullet} \subset p'{\bullet} \subseteq \hat{p}{\bullet}$. As all places in $\mathit{cand}_g(p')$ are postset expansions of $p'$, \Cref{thm:underfedness-monotonicity} directly implies the claim.
\end{proof}

Due to our design decision to prefer postset expansions in our definition of $g$, the overfedness subtree-monotonicity is a lot more restricted.

\begin{lemma}[Overfedness Subtree-Monotonicity]
\label{lem:overfedness-subtree}
Let $p \in \mathbb{P}$ be a place with $\mathit{post}^+(p) = \emptyset$ and $p' \in \mathit{pre}^+(p)$ be a direct preset expansion of it. For any $\sigma \in \mathbb{B}$, it holds that:
\[
\triangle_\sigma(p) \implies \forall \hat{p} \in \mathit{cand}_g(p') : \triangle_\sigma(\hat{p})
\]
\end{lemma}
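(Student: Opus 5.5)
The plan mirrors the proof of \Cref{lem:underfedness-subtree}: show that every place in $\mathit{cand}_g(p')$ is a preset expansion of $p$, and then invoke the overfedness half of \Cref{thm:underfedness-monotonicity}. The work is purely structural, analysing what the case distinction in $g$ can generate below $p'$.

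\textbf{Step 1: the subtree below $p'$ contains no postset expansions.} Write $p = (I, O)$ and $p' = (I \cup \{a\}, O)$, where $a \in A_\mathrm{pre}$ is larger under $<_\mathrm{pre}$ than every element of $I$. Since $\mathit{post}^+$ depends only on $O$, the assumption $\mathit{post}^+(p) = \emptyset$ says there is no $a' \in A_\mathrm{post}$ with $a' >_\mathrm{post} o$ for all $o \in O$. This means $O \neq \emptyset$ and $O$ already contains the $<_\mathrm{post}$-maximal element of $A_\mathrm{post}$. Consequently $\mathit{post}^+(\hat p) = \emptyset$ for every place $\hat p$ whose postset equals $O$.

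\textbf{Step 2: induction along the generation logic.} Every case of $g$ produces children only through $\mathit{pre}^+$ and $\mathit{post}^+$. By induction on depth in the subtree, each $\hat p \in \mathit{cand}_g(p')$ has postset $O$: the postset expansions contribute nothing by Step 1, and preset expansions leave the postset unchanged. In the same induction, presets only grow, so ${\bullet}\hat p \supseteq {\bullet}p' \supsetneq I$. Hence every $\hat p \in \mathit{cand}_g(p')$ is a preset expansion of $p$ in the sense of \Cref{def:expansions}. Where $g$ applies only $\mathit{post}^+$ (for instance when $|O| > 1$), the subtree is just $\{p'\}$, which is still a preset expansion, so the argument covers this case as well.

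\textbf{Step 3: conclude.} Apply the overfedness statement of \Cref{thm:underfedness-monotonicity} to each such $\hat p$: if $\triangle_\sigma(p)$ holds, then $\triangle_\sigma(\hat p)$ holds.

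The main obstacle is really Step 1, namely translating the hypothesis $\mathit{post}^+(p) = \emptyset$ into a property that holds throughout the subtree. This works because $\mathit{post}^+$ depends only on the postset and the postset is invariant under preset expansions. Should \Cref{thm:underfedness-monotonicity} need justification for $\triangle$, it follows directly from the definitions. Along $\sigma$, the token count of $\hat p$ is that of $p$ plus the number of occurrences of activities in ${\bullet}\hat p \setminus {\bullet}p$. This surplus is nonnegative and can only make the final count exceed zero further. The implication therefore holds at least when $p$ is non-balanced because it overproduces; the case where $p$ underproduces at the end also implies underfedness on the same trace, which is worth double-checking against the paper's convention.
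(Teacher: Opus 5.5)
Your Steps 1--3 follow the paper's proof, written out in full. The paper only remarks that below a place with $\mathit{post}^+(p)=\emptyset$ there are nothing but preset expansions, and then cites the $\triangle$-half of \Cref{thm:underfedness-monotonicity}. Your observation that emptiness of $\mathit{post}^+$ depends only on the postset, which preset expansions never change, is exactly the justification that remark needs.

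The case you flag at the end should be settled rather than left to ``double-check'', and it cannot be settled in favour of the statement. With $\triangle_\sigma(p)$ taken literally as $\neg\text{\textsc{balanced}}(\sigma,p)$, both this lemma and the $\triangle$-half of \Cref{thm:underfedness-monotonicity} are false. Take $A_\mathrm{pre}=\{a,c\}$ with $a<_\mathrm{pre}c$, $A_\mathrm{post}=\{b\}$, and $p=(\{a\},\{b\})$, so that $\mathit{post}^+(p)=\emptyset$. Let $p'=(\{a,c\},\{b\})\in\mathit{pre}^+(p)$ and $\sigma=\langle\sactivity,c,b,\eactivity\rangle$. Then $p$ sees $0$ preset and $1$ postset occurrences, so $\triangle_\sigma(p)$ holds. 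But $p'\in\mathit{cand}_g(p')$ sees one of each and is balanced, indeed fitting.

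Your surplus argument proves exactly the true version: if $p$ strictly overproduces, so does every preset expansion. So state the conclusion explicitly. The lemma holds when overfedness means strict overproduction, as the prose ``non-empty marking at the end \ldots\ excess of token production'' intends. The paper's one-line proof has the same defect, inherited from \Cref{thm:underfedness-monotonicity}. The strict reading is consistent with the rest of the paper: \textsc{non-neg} already forces at least as many preset as postset occurrences over the whole trace, so fitness remains exactly the absence of under- and overfedness. Under that reading your proof is complete.
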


\begin{proof}
A place $p$ with non-expandable postset, i.e., $\mathit{post}^+(p) = \emptyset$, can only have preset expansions in its subtree. The claim then follows directly from \Cref{thm:underfedness-monotonicity}.
\end{proof}

Note that the only places $p$ in an unconstrained tree that have a non-expandable postset but still have preset children to prune are those with $p{\bullet} = \{x\}$ where $x \in A_\mathrm{post}$ is the maximal element of $A_\mathrm{post}$, i.e., $\forall a \in A_\mathrm{post} \setminus \{x\} : a <_\mathrm{post} x$.

We showed that these behavior-level properties $\triangledown$ and $\triangle$ satisfy the desired monotonicity. Using \Cref{thm:lifted-monotonicity}, we can thus conclude the following property of the counted fractions.

\begin{lemma}[Lifted Under/Overfedness Monotonicity]
\label{lem:lifted-underfedness}
Let $[\,] \neq L \in \mathcal{B}(\mathbb{B})$ be a non-empty event log and $p \in \mathbb{P}$ be a place of depth at least 2. For all places in the candidate sets of its postset expansions, i.e., $\bigcup_{p' \in \mathit{post}^+(p)} \mathit{cand}_g(p')$, it holds that:
\[
\#^{\triangledownscript}_{L,p} \leq \#^{\triangledownscript}_{L,p'}
\]
This is analogous for $\triangle$ and preset expansions.
\end{lemma}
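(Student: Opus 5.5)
The plan is to combine \Cref{lem:underfedness-subtree} with \Cref{thm:lifted-monotonicity}, instantiating the behavior-level property as $R(\sigma, q) :\Leftrightarrow \triangledown_\sigma(q)$.

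Fix a non-empty log $L$ and a place $p$ of depth at least two. Take an arbitrary $p' \in \mathit{post}^+(p)$ and an arbitrary $\hat{p} \in \mathit{cand}_g(p')$; the statement's $p'$ ranges over exactly such $\hat{p}$. For any $\sigma \in L$, \Cref{lem:underfedness-subtree} applies because $p$ has depth at least two and $p'$ is a direct postset expansion of $p$. It yields $\triangledown_\sigma(p) \implies \triangledown_\sigma(\hat{p})$. This implication holds for all $\sigma \in L$, which is precisely the hypothesis of \Cref{thm:lifted-monotonicity} with the pair $(p, \hat{p})$. That theorem then gives $\#^{\triangledownscript}_{L,p} \leq \#^{\triangledownscript}_{L,\hat{p}}$, as claimed.

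The main point to check is that the lemma's hypotheses are satisfied uniformly over the union $\bigcup_{p' \in \mathit{post}^+(p)} \mathit{cand}_g(p')$. Here I also need that every $\hat{p}$ in that union is a genuine postset expansion of $p$, and the proof of \Cref{lem:underfedness-subtree} already establishes this: ${\bullet}\hat{p} = {\bullet}p$ and $p{\bullet} \subset \hat{p}{\bullet}$. Alternatively, this containment lets me cite \Cref{thm:underfedness-monotonicity} directly, bypassing the tree structure altogether.

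For the overfedness analogue, I would use $R(\sigma, q) :\Leftrightarrow \triangle_\sigma(q)$ and cite \Cref{lem:overfedness-subtree} in place of \Cref{lem:underfedness-subtree}. The obstacle is that the tree-based lemma requires $\mathit{post}^+(p) = \emptyset$, since otherwise the subtree of a preset child may contain postset expansions. So "analogous" must be read either under that hypothesis, or with the family of places taken to be all preset expansions of $p$ in the sense of \Cref{def:expansions}. In either reading, \Cref{thm:underfedness-monotonicity} supplies the behavior-level implication $\triangle_\sigma(p) \implies \triangle_\sigma(\hat{p})$ for every $\sigma \in L$. \Cref{thm:lifted-monotonicity} then gives $\#^{\triangle}_{L,p} \leq \#^{\triangle}_{L,\hat{p}}$.
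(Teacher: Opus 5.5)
Your proposal is correct and follows the paper's proof: instantiate \Cref{thm:lifted-monotonicity} with underfedness as the behavior-level property, and use \Cref{lem:underfedness-subtree} to supply the required implication for every place in $\bigcup_{p' \in \mathit{post}^+(p)} \mathit{cand}_g(p')$. Your remark that the overfedness analogue needs the hypothesis $\mathit{post}^+(p) = \emptyset$ from \Cref{lem:overfedness-subtree} correctly spells out what the paper leaves implicit in the word ``analogous''.
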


\begin{proof}
Use \Cref{thm:lifted-monotonicity} with the $\triangledown$ as the behavior-level property, which is monotonic for the considered places according to \Cref{lem:underfedness-subtree}.
\end{proof}

We can now use the log-level lifting of the place fitness evaluator $\mathit{eval}^\mathit{ind}_{\square,{\triangledownscript},\triangle}$ to filter places and generate constraints. Specifically, we use thresholds to convert the fractions back again into truth values. It is clear to see that $\square_L(p) \Leftrightarrow \#^\square_{L,p} = 1$. Additionally, we have the following relation to over- and underfedness: $\#^\square_{L,p} \leq 1 - \#^{\triangledownscript}_{L,p}$ and $\#^\square_{L,p} \leq 1 - \#^\triangle_{L,p}$. That is, if a place should fit a certain fraction $\tau \in [0, 1]$ of behavior in a log, it can be underfed (and overfed) on at most $(1 - \tau) \cdot 100\%$ of the traces.

Given a threshold $\tau \in [0, 1]$ and an evaluation result $(\#^\square_{L,p}, \#^{\triangledownscript}_{L,p}, \#^\triangle_{L,p})$ of a place $p$ on log $L$, we can then generate the following constraints in \textsc{generate-constraint}.

If $\#^{\triangledownscript}_{L,p} > 1 - \tau \wedge |{\bullet}p| + |p{\bullet}| \geq 2$ holds, return constraint
\[
\text{\textsc{prune-postset-subtree}}_p(\hat{p}) :\Leftrightarrow \hat{p} \notin \bigcup_{p' \in \mathit{post}^+(p)} \mathit{cand}_g(p')
\]
Otherwise, if $\#^\triangle_{L,p} > 1 - \tau \wedge \mathit{post}^+(p) = \emptyset$ holds, return constraint
\[
\text{\textsc{prune-preset-subtree}}_p(\hat{p}) :\Leftrightarrow \hat{p} \notin \bigcup_{p' \in \mathit{pre}^+(p)} \mathit{cand}_g(p')
\]
These constraints might seem expensive to check, however, due to our choice of $g$, they only affect the subtrees of $p$, i.e., they are local. Thus, they can trivially be represented such that $g(p)\upharpoonright_{\mathbb{I}_\text{\textsc{prune-postset-subtree}}} = g(p) \setminus \mathit{post}^+(p)$, or respectively $g(p)\upharpoonright_{\mathbb{I}_\text{\textsc{prune-preset-subtree}}} = g(p) \setminus \mathit{pre}^+(p)$, takes no additional checks by simply marking these children as already visited in the tree state via $\nu$.

To visualize this, consider the candidate subtree presented in \Cref{fig:candidate-subtree} and event log $L = [\langle \sactivity, a, b, \eactivity \rangle^4, \langle \sactivity, a, a, \eactivity \rangle^6]$. Assume the root place $r = (\{\sactivity\}, \{a\})$ is being proposed. On behavior $\langle \sactivity, a, b, \eactivity \rangle$, $r$ is fitting, not underfed, and not overfed, and on $\langle \sactivity, a, a, \eactivity \rangle$, it is not fitting, underfed, and not overfed. Thus, $\#^\square_{L,r} = 0.4$, $\#^{\triangledownscript}_{L,r} = 0.6$ and $\#^\triangle_{L,r} = 0$. Given the threshold $\tau = 0.5$, we can discard $r$, as no Petri net that contains $r$ could replay more than 40\% of $L$. Furthermore, we can generate the constraint $\text{\textsc{prune-postset-subtree}}_r(\hat{p}) :\Leftrightarrow \hat{p} \notin \bigcup_{p' \in \mathit{post}^+(r)} \mathit{cand}_g(p')$. That means, we can prune all the descendants of $r$ that are connected purely along blue edges (postset expansions). In this case $(\{\sactivity\}, \{a, b\})$, $(\{\sactivity\}, \{a, c\})$, $(\{\sactivity\}, \{a, \eactivity\})$ and their children (left side of the tree). We can easily verify that none of the affected places $p$ could ever have an underfed fraction smaller than 0.6, i.e., we know $\#^{\triangledownscript}_{L,p} \geq 0.6$ and thus $\#^\square_{L,p} \leq 1 - \#^{\triangledownscript}_{L,p} \leq 0.4$. At the same time, we can see that places that are descended via a red edge (preset expansion), e.g., $p' = (\{\sactivity, a\}, \{a, \eactivity\})$ cannot easily be pruned. Both behaviors of $L$ fit on $p'$.

We expect that many desirable final model properties can be encoded into similar evaluations to keep intermediate results invariant regarding that property. Another example are \emph{uniwired} Petri nets~\cite{Mannel2019FindingUniwired}. Two activities $a, a' \in \mathbb{A}$ are wired by a place $p \in \mathbb{P}$ if and only if $a \in {\bullet}p \wedge a' \in p{\bullet}$. Uniwired Petri nets have the property that for any two activities, they contain at most one place that wires them. This can easily be translated into a subtree-monotonic constraint. Specifically, if we accept a candidate $p$ into our intermediate solution, we can generate the constraint $\text{\textsc{not-biwiring}}_p(\hat{p}) :\Leftrightarrow ({\bullet}\hat{p} \cap {\bullet}p) = \emptyset \vee (\hat{p}{\bullet} \cap p{\bullet}) = \emptyset$. Note that this constraint is not as ``pretty'' (local) as the fitness-derived constraints, as it can exclude previously proposed candidates and is highly sensitive to ordering. However, it is very strong, i.e., it shrinks the candidate space dramatically, because it touches multiple disjoint subtrees.

\subsubsection{Concrete Composition State Update}
\label{sec:composition-state-update}

In the following, we briefly discuss what can be done in the \textsc{update-state} procedure of \Cref{alg:general-composing}. As mentioned earlier, the general version is very expressive with its arbitrary state but not yet fully utilized by existing instantiations. Currently, as we rely on existing eST-Miner variants, we are closer to the greedy simplification given in \Cref{alg:greedy-composing}. We want to note here that some variants~\cite{Mannel2019FindingUniwired,Mannel2022DiscoveringProcessModels} are actually what we would call \emph{relaxed greedy} as they have the ability to postpone this acceptance decision and collect postponed candidates in a heuristically-ordered priority queue. We refer the interested reader to the primary literature.

Greedy composition relies heavily on place-local evaluations. However, there is, of course, a gap between place-local evaluations and model-level properties. For example, while the place fitness evaluation $\mathit{eval}^\mathit{ind}_{\square,{\triangledownscript},\triangle}$ can successfully be used for filtering candidates and generating constraints, it is not sufficiently strong to guarantee fitness of the set of all accepted places in the case of $\tau < 1$. Naive application could easily produce even a deadlocked model, if only disjoint partitions of the log behavior are fitting on the constituent places. Refer to~\cite{Mannel2022DiscoveringProcessModels} by Mannel et al.\ and the therein introduced ``Delta Variant''.

Furthermore, evaluation relative to the intermediate result is useful for ensuring invariants or final properties of result sets. One such invariant is the absence of implicit places described before in \Cref{sec:concrete-evaluations}. Assuming an evaluation oracle, e.g., $\mathit{eval}^\mathit{rel}_\text{global implicitness} : \mathbb{P} \times \mathcal{P}(\mathbb{P}) \to \{\mathrm{True}, \mathrm{False}\}$, it is trivially incorporated into the framework. In our implementation, we provide an LP (linear program)-based and replay-based evaluator for $\mathit{eval}^\mathit{rel}_\text{global implicitness}$ and $\mathit{eval}^\mathit{rel}_\text{replay implicitness}$ respectively.

When this entire proposal, evaluation, and composition cycle terminates, either due to candidate exhaustion or otherwise signaled by time limit or surpassed quality threshold, we initiate one round of post-processing. Within this algorithmic framework, and particularly greedy composition, there are some areas where unwanted local structures can build up such that a ``cleanup'' step becomes integral to the overall discovery quality.

\subsection{Post-Processing}
\label{sec:post-processing}

We make post-processing a first class member of this discovery framework to provide an opportunity for fixing quirks that can come up in bottom-up discovery. Basically, allowing a last global \emph{look over} after being stuck in local decision-making for efficiency. In particular, we intend for this step to be pipelined with sequential execution of reusable transformations. From thorough structural implicit place removal after, e.g., a heuristic application during PEC-cycling, to place merging for increasing simplicity when the tree traversal depth is limited.

As we allow graceful premature cancellation of the PEC-cycling stage, it can also be used to fix artifacts that occur when not all possible places have been directly or indirectly considered. The aforementioned place merging is an example of that. Lastly, some technical aspects like model conversion can now be easily incorporated. Generally, any model repair strategy and particularly, existing implementation of it, could be applied here. This way, existing work can be easily reused and combined, which is one of the major goals of our framework implementation that we present in the next section.

\section{Implementation}
\label{sec:implementation}

In this section, we introduce our extensive implementation of the conceptual framework. The full development repository is available on GitHub\footnote{\url{https://github.com/leah-tgu/specpp}}. Additionally, we have published a bundled plugin with an interactive GUI frontend in the package \textsc{SPECpp} on the ProM nightly build\footnote{\url{https://www.promtools.org/doku.php?id=nightly}}. This frontend is also included in the GitHub repository.

A basic overview of the components is shown in \Cref{fig:pec-components}. The software components largely correspond to the elements we described in the previous section. On the proposal (left) side, we have the \textsc{Proposer} which uses the \textsc{Candidate Tree}. The tree is made up of its \textsc{Expansion Strategy}, and the crucial \textsc{Child Generation Logic} which is responsible for efficiently incorporating constraints. Proposed candidates are handed over to the \textsc{Composer} component that internally manages a \textsc{Composition} which stands in for the internal state that is updated using the \textsc{Individual} and \textsc{Relative Evaluators}. The composer may generate constraints for the proposer in response to a candidate.

\begin{figure}[htbp]
  \centering
  \includegraphics[width=0.8\linewidth]{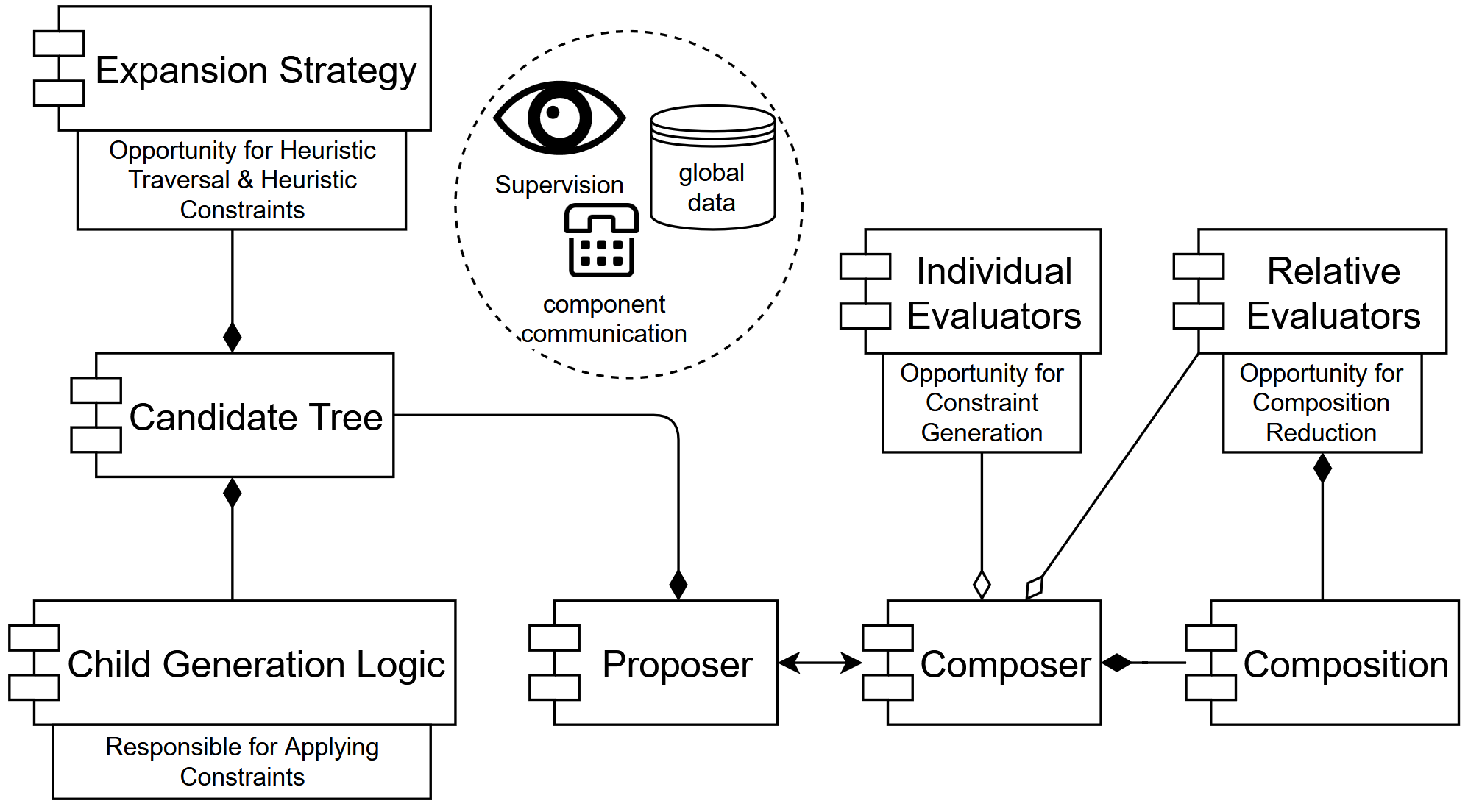}
  \caption{An overview of the main PEC components, their dependencies and their main responsibility. They mimic the conceptual structure with the addition of \emph{Supervision}.}
  \label{fig:pec-components}
\end{figure}

Additionally, there is the more framework implementation relevant supervision, shared global data storage and component communication linking service. Particularly the \textsc{Supervision} system is of note here. In this setting, where heuristic ordering, greedy decision making, constraints and unforeseen combinations of the former are abound, we consider it a first-class requirement to be able to inspect in some manner how a particular result came together. The internal execution, data and communication flow is so dynamic that in order to know \emph{what happened} during execution, it is of great benefit to allow developers to define custom descriptive events in their algorithm variants. These are asynchronously collected and handled flexibly in the supervision system. Arbitrary supervisors can be implemented to inspect the behavior even at runtime. For example, this system is used for a live discovery view in the interactive ProM plugin. More basic usage would be logging to a log file and/or the console. Particularly important for evaluating new proposal, evaluation or composition strategies is also a view on \emph{performance}. It is detailed to individual executions for future uses, however, even the aggregated average per task timing and occurrence frequency are massively useful in finding bottlenecks.

Supervision, as well as the other mentioned framework facilities, are realized by a requirement management system that permits dynamic at-runtime provisioning and requesting of data and functions across components. As we developed this software framework with the intention to support a high degree of flexibility, interfaces are very minimal and general. To make up for this, we provide the aforementioned systems to ease coordination and monitoring. Furthermore, the software framework facilitates quick prototyping through reuse via its focus on interoperability of different variants/implementations. Ideally, new implementations are split into separable components such that parts of these approaches can be reused and combined.

\subsection{Available Component Instantiations}
\label{sec:component-instantiations}

We fitted our framework implementation with some base components for the instantiations we described in \Cref{sec:conceptual-framework}. It explicitly supports heuristic expansion strategies, including the scoring metrics introduced in~\cite{Mannel2020ImprovingState}. Basic depth-first, which is most memory efficient, and breadth-first expansion are available as well. The aforementioned paper also introduces activity ordering strategies for computing $<_\mathrm{pre}$ and $<_\mathrm{post}$ using metrics derived from the input log. We include these as well as lexicographic and random orderings. Our child generation logic implementation supports the crucial preset/postset expansion cutoff constraints, depth limiting, wiring constraints and transition blacklisting. In contrast to our conceptual framework, we can permit non subtree-monotonic constraints at the cost of incomplete traversal. This occurs for example when we allow wiring constraints to be deleted after the wiring place is retroactively removed from the intermediate result set. Another avenue that is technically possible, is restarting the candidate tree traversal with the current constraints to get closer to completeness in these cases.

On the side of composers, we included the pattern of recursion/nesting. After applying their own decision making logic, recursive composers delegate the candidate handling to a child component. This allows reuse of ``partial'' composition strategies like fitness filtering. The main implemented strategies are the aforementioned fitness filtering corresponding to the base eST-Miner (refer to~\cite{Mannel2019FindingComplex}), a uniwired composer as introduced in~\cite{Mannel2019FindingUniwired}, the delta composer from~\cite{Mannel2022DiscoveringProcessModels} and an implementation for concurrent implicit place removal from~\cite{Mannel2020RemovingImplicit} supporting both the replay-based version as well as the linear-programming based structural version. Similarly, this nesting is supported for compositions to allow an outer composition to internally manage other compositions.

For evaluators, there is a standard and a parallelized version of token-based replay fitness which supports computation on log-subsets. Additionally, there is a facility for computing place token counts over traces which is used in replay-based implicit place removal. Implicit place checkers (both kinds) are also available as relative evaluators.

There are two ways to access the framework. For one, there is the headless package (i.e., not depending on ProM) in the repository. It contains main classes for command line access to simple ``event log in -- Petri net out'' execution, experimentation execution with lots of output and logging, and a command line interface for batch execution of freely configurable parameter variations. The latter also contains documentation and a usage guide for file-based component configuration. The other way to access at least a part of the capabilities of the framework is the ProM plugin.

\subsection{The ProM Plugin}
\label{sec:prom-plugin}

The accompanying ProM plugin is layered on top of the framework implementation itself. It serves the purpose of providing a visual and interactive interface to advanced process mining users who merely want to access already implemented variants. Moreover, the ProM infrastructure enables combining this approach with more than 1500 other plugins. Due to the extensive configuration options and their importance, it may not be very accessible to people not familiar with the research behind it. Even though we do provide presets, some deeper understanding is required for a proper interpretation of the results. The plugin and the ProM platform in general are not really geared towards ``true process mining end users''. In the following, we go into detail about its usage.

After selecting an event log input and calling the \textsc{Interactive SPECpp} plugin, the user is presented with an interactive GUI that sequentially guides them through configuration and execution of the underlying \textsc{SPECpp} discovery. The procedure is divided into the stages \emph{Pre-Processing}, \emph{Configuration}, \emph{Discovery} and \emph{Results}. Dynamic indicator buttons on the top row indicate the current, as well as currently enabled, stages. The active stage is marked in pink and its label is underlined. Accessible stages are marked in blue, whereas inaccessible ones are greyed out. This is exemplified in \Cref{fig:prom-plugin}. The user is in the discovery stage and can at any time go back to pre-processing and configuration but cannot skip ahead to the results because the computation is still running. Selecting a previous (to the left of the current stage) stage, resets the internal progress to that stage. For the stages concerning configuration, the last executed configuration is restored. This enables fast experimentation with parameters and settings.

\paragraph{Pre-Processing}

This stage gives an opportunity to preview the loaded data after applying an event classifier and an activity ordering strategy. It is further possible to select only a subset of activities to consider for place presets or postsets respectively.

\paragraph{Configuration}

This view has a dynamic list of options categorized into each of the framework's major components, as well as a section for parameterization. The first section offers presets for the eST-Miner variants proposed by Mannel et al.~\cite{Mannel2019FindingComplex,Mannel2019FindingUniwired,Mannel2022DiscoveringProcessModels}. Additionally, supervision can be turned off (or reduced) for increased performance if no insight into the execution specifics is required. Generally, only a simplified, manually designed component implementation selection is possible via this interface, however, the post-processing remains as generic as the underlying framework itself. The reason for this abstraction is usability. The full feature set is accessible to developers via code-based and \texttt{.json} file-based configuration. We recommend interested researchers to preferably check out the GitHub page.

\paragraph{Discovery}

The discovery view presented in \Cref{fig:prom-plugin} is constantly asynchronously updated while the PEC-cycling is performed. The left panel displays the currently accepted places either as a list or graph. The user can select the update frequency. For a high number of places, graph layouting becomes useless (and slow), so it is disabled. The right hand side is separated into four sections which are also all continuously updated. From top to bottom, there are sub panels for the search space, execution progress, performance, and monitored events.

\begin{figure}[htbp]
  \centering
  \rotatebox{90}{%
    \begin{minipage}{\textheight}
      \centering
      \includegraphics[
        width=\linewidth,
        height=\textwidth,
        keepaspectratio
      ]{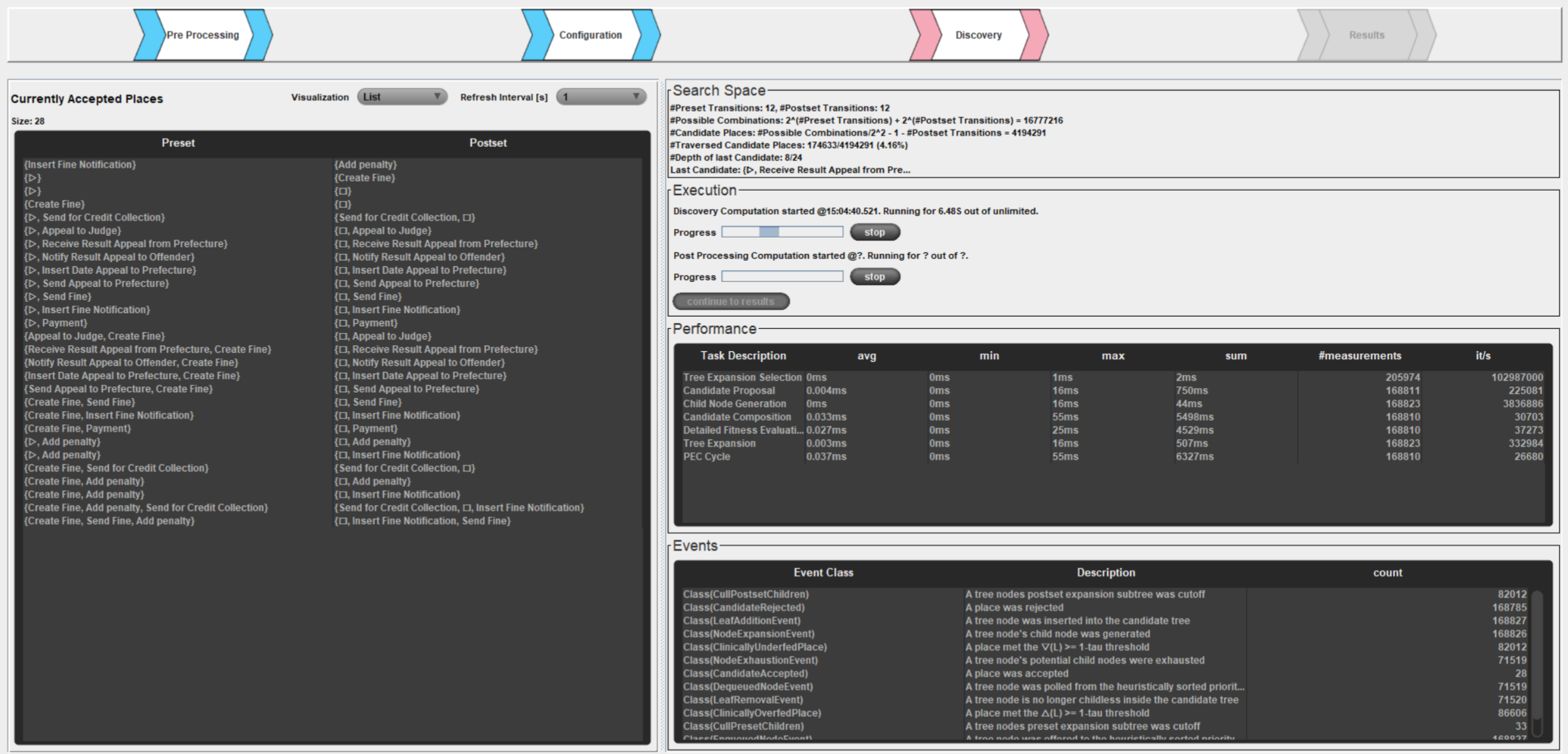}

      \captionof{figure}{The discovery view of the ProM plugin. The left split shows the currently accepted places in a list view. A graph visualization is available as well. The right side provides real-time information about the computation regarding candidate space traversal, performance, and events.}
      \label{fig:prom-plugin}
    \end{minipage}%
  }
\end{figure}

The search space panel gives an overview of the maximum number of candidate places, as well as the as-of-now computed fraction of those. The indicator of the last candidate's tree depth can additionally give a hint towards the remaining processing time. The elapsed time for the main PEC-cycling (place \emph{discovery}) and final post-processing computation is displayed on the execution progress panel. If time limits were configured, they are shown as well. There is additionally the option to manually cancel the computation at any time. Cancellation of the first stage is \emph{graceful}, i.e., the current intermediate result can be used for further post-processing. When both steps terminate successfully, the ``continue to result'' button is enabled.

The performance panel consists of a table that summarizes all currently observed (via the supervision system) performance events. The columns are selected summary statistics such as mean, min, max and frequency. Any developer-defined custom events are caught here as well. So, for example a newly implemented heuristic's performance can be instrumented as well. Similar to the performance table, the event table lists all currently observed descriptive events generated by the components and their frequencies. This helps with understanding how often certain paths are taken in proposal or composition decisions, e.g., how many underfed places are detected or how many implicit places were rejected. Any custom developer-defined events can be displayed here. If supervision was disabled or reduced, the corresponding tables remain empty.

\paragraph{Results}

The final resulting Petri net, as well as potentially visualizable intermediate post-processing results are displayed. Fitness for the entire model, as well as individual places in the initially resulting place set can be displayed. Additionally, overall alignment-based fitness and ETC precision are calculated. From the export panel, the user can export the final resulting Petri net and used algorithm configuration.

\section{Evaluation}
\label{sec:experiments}

To evaluate our approach, we performed automated experiments on synthetic as well as real-life event logs. The intent is to demonstrate the capabilities of the basic instantiations of conceptual components to achieve acceptable results on popular evaluation datasets. Furthermore, we highlight features of our implementation. We use the synthetic \textsc{Repair, Reviewing} and  \textsc{Teleclaims} event logs from~\cite{vanderAalst2016DataScience}, and real-life logs  \textsc{HospitalBilling}\footnote{\url{https://doi.org/10.1007/978-3-319-59536-8_34}},  \textsc{Sepsis}~\cite{Mannhardt2016Sepsis} and  \textsc{Road Traffic Fine Management (RTFM)}~\cite{deLeoni2015RoadTraffic}. As well as some past BPI challenge logs of 2012\footnote{\url{https://doi.org/10.4121/uuid:3926db30-f712-4394-aebc-75976070e91f}} and the  \textsc{Domestic Declarations (DomDecl)} log from the challenge's 2020\footnote{\url{https://doi.org/10.4121/uuid:52fb97d4-4588-43c9-9d04-3604d4613b51}} rendition. This selection of evaluation targets is diverse in its structure and represents staples in this discipline. \Cref{tab:event-logs} gives an overview. A high number of variants slows down fitness evaluation but most importantly, the number of occurring activities determines the number of possible candidate places. We can already see that BPIC12 presents an almost adversarial input.

We applied the batch execution feature of our implementation to perform discovery over a range of parameterizations.

\begin{table}[htbp]
  \centering
  \caption{The event logs used for our evaluation. The real-life event logs contain considerable variation. 80\% of traces in the \textsc{Sepsis} event log are unique. Whereas our artificial logs are quite small, the \textsc{RTFM} log contains over half a million events. Also, the BPI challenge logs sport a high number of activities.}
  \label{tab:event-logs}
  \begin{tabular}{lrrrr}
\hline
\textbf{Event Log} & \textbf{\#Traces} & \textbf{\#Variants} & \textbf{\#Activities} & \textbf{\#Events} \\
\hline
Teleclaims & 3,512 & 12 & 11 & 24,825 \\
Repair & 1,104 & 77 & 8 & 11,855 \\
Reviewing & 100 & 96 & 14 & 2,278 \\
HospitalBilling & 100,000 & 1,020 & 18 & 451,359 \\
Sepsis & 1,050 & 846 & 16 & 15,214 \\
RTFM & 150,370 & 231 & 11 & 561,470 \\
BPIC12 & 13,087 & 4,366 & 24 & 262,200 \\
BPIC20.DomDecl & 10,500 & 99 & 17 & 56,437 \\
\hline
\end{tabular}
\end{table}

For the component configuration, we chose the \emph{AverageFirstOccurrenceIndex} activity ordering from~\cite{Mannel2020ImprovingState}, the therein defined eventually-follows interest score for heuristic tree traversal and $\tau$-thresholding composition with concurrent implicit place removal from~\cite{Mannel2020RemovingImplicit}. As post-processing, we selected \emph{LP-based implicit place removal} from~\cite{Mannel2019FindingComplex} and \emph{Self-loop place merging}. These transformations merely simplify the model and do not change its supported behavior. Due to the not fully characterized nature of replay-based implicitness checking from~\cite{Mannel2020RemovingImplicit} in the $\tau < 1$ regime (i.e., log and model not exhibiting the same language), we forego using it as a pre-filtering step before the thorough LP-based version. This would be very beneficial for performance but as we discovered, it does not consistently underestimate implicitness. This selection of components is one of the simplest and nicely showcases the use-case of plugging together methods that were originally proposed in separate works.

The parameters were varied according to \Cref{tab:parameters}. The actual files, code and results are available on the project's GitHub page\footnote{\url{https://github.com/leah-tgu/specpp/releases/tag/paper_eval}}. $\tau$ is the threshold for the fraction of fitting traces as an acceptance criterion ($\#^\square_{L,p}$), as well as constraint generation in case the place is under/overfed on more than $(1 - \tau) \cdot 100\%$ of traces (as introduced in \Cref{sec:concrete-constraint-generation}). The tree depth limit almost determines runtime by severely limiting the exponential activity combinations. As the replay-based implicit place removal comes with no formal guarantees for most of our parameterizations as mentioned above, it turns into a mere heuristic. So, we test both versions \emph{RegionBased} ($\mathit{eval}^\mathit{rel}_\text{replay\_implicitness}$) and \emph{LPBased} ($\mathit{eval}^\mathit{rel}_\text{global\_implicitness}$) as our relative implicitness oracle during composition. We want to emphasize that limiting the tree depth to these low values compared to the full depth of $2 \cdot \mathit{\#Activities}$ is a reasonable choice as places with more than, e.g., six arcs connected to them, are detrimental to visual simplicity. The place depth acts as a very natural low-level complexity measure. Additionally, restricting control-flow dependencies to such few arcs is still very powerful. Few manually designed models will connect more than six activities to a place.

\begin{table}[htbp]
  \centering
  \caption{The varied parameters and their ranges. They were varied independently over their entire ranges resulting in $6 \cdot 5 \cdot 2 = 60$ combinations. The exact configuration files and code are available on the project's GitHub page. Decreasing $\tau$ and increasing the max tree depth increases the number of possible fitting places in the search space.}
  \label{tab:parameters}
  \begin{tabular}{lr}
\hline
\textbf{Parameter} & \textbf{Range} \\
\hline
$\tau$ & $\{1.0, 0.9, 0.8, 0.7, 0.6, 0.5\}$ \\
max tree depth & $\{2, 3, 4, 5, 6\}$ \\
implicitness calculator & $\{\text{RegionBased, LPBased}\}$ \\
\hline
\end{tabular}
\end{table}

We limited the execution time to 10 minutes for PEC-cycling, after which we gracefully stopped the loop and initiated post-processing which was also limited to 10 minutes before the computation was canceled. The LP-based post-processing in its current naive implementation has to solve quadratically many quadratically sized LPs (in the number of collected places), so if it does not terminate timely, it is a good indication that the resulting model would contain hundreds (or thousands in some cases) of places, making it thoroughly impractical to use and evaluate.

\Cref{tab:runtimes} gives some summary statistics on the recorded runtimes. Note that timeouts are not sharp because they rely on cooperative termination, thus the recorded execution times can exceed the set limits slightly. PEC timeouts are not that problematic as they are soft/graceful cancellations, i.e., the result is still being used. The total timeouts, i.e., those where a hard cancellation occurred, are more common among the bigger real-life logs. Particularly logs with high numbers of unique activities are problematic. \Cref{tab:collected-places} illustrates that for executions that did not finish (DNF), the number of collected places in the intermediate result, i.e., those meeting the $\tau$ threshold which were not identified as implicit to the rest, is indeed immense. It is not a loss to not consider such models. Rather, this points out the need for more sophisticated filters and candidate acceptance rules which are already being investigated by Mannel et al. A recent result is~\cite{Gross2023EnhancingApplicability}.

\begin{sidewaystable}[htbp]
  \centering
  \caption{Summary of the algorithm runtimes over all 60 attempted parameterizations. \textsc{BPIC20} executions often ran into the hard-cancellation timeout due to the high number of fitting places.}
  \label{tab:runtimes}
\resizebox{\textheight}{!}{%
\begin{tabular}{l@{\hspace{4.5pt}}c@{\hspace{4.5pt}}c@{\hspace{4.5pt}}c@{\hspace{4.5pt}}c@{\hspace{4.5pt}}c@{\hspace{4.5pt}}c@{\hspace{4.5pt}}c@{\hspace{4.5pt}}c@{\hspace{4.5pt}}c@{\hspace{4.5pt}}c@{\hspace{4.5pt}}c}
\hline
\textbf{Event Log} 
  & \multicolumn{2}{c}{\textbf{Timeouts}} 
  & \multicolumn{4}{c}{\textbf{PEC-cycling [s]}} 
  & \multicolumn{4}{c}{\textbf{Post-Processing [s]}} 
  & \textbf{Total [s]} \\
\cmidrule(lr){2-3} \cmidrule(lr){4-7} \cmidrule(lr){8-11} \cmidrule(lr){12-12}
& \textbf{PEC} & \textbf{Total} 
  & \textbf{10\%-pcrtl} & \textbf{avg} & \textbf{90\%-pcrtl} & \textbf{std} 
  & \textbf{10\%-pcrtl} & \textbf{avg} & \textbf{90\%-pcrtl} & \textbf{std} 
  & \textbf{avg} \\
\hline
Teleclaims & 0.0\% & 3.3\% & 0.02 & 14.30 & 28.98 & 48.71 & 0.01 & 28.52 & 4.68 & 118.13 & 42.82 \\
Repair & 0.0\% & 0.0\% & 0.02 & 1.13 & 2.35 & 2.56 & 0.00 & 1.28 & 0.55 & 8.08 & 2.41 \\
Reviewing & 0.0\% & 1.7\% & 0.14 & 9.31 & 28.16 & 16.67 & 0.00 & 10.68 & 1.58 & 77.40 & 19.99 \\
HospitalBilling & 33.3\% & 25.0\% & 1.79 & 293.34 & 600.04 & 269.42 & 0.15 & 181.76 & 600.00 & 255.08 & 475.09 \\
Sepsis & 11.7\% & 18.3\% & 1.64 & 174.38 & 600.01 & 215.16 & 0.08 & 129.61 & 600.00 & 228.42 & 303.99 \\
RTFM & 0.0\% & 15.0\% & 0.19 & 15.89 & 55.73 & 31.48 & 0.05 & 103.82 & 600.00 & 215.49 & 119.71 \\
BPIC12 & 56.7\% & 18.3\% & 17.85 & 407.49 & 600.71 & 244.93 & 0.16 & 202.00 & 600.00 & 245.41 & 609.49 \\
BPIC20.DomDecl & 31.7\% & 25.0\% & 0.81 & 253.26 & 600.15 & 265.78 & 0.94 & 179.29 & 600.00 & 219.73 & 432.55 \\
\hline
\end{tabular}%
}
\end{sidewaystable}

\begin{table}[htbp]
  \centering
  \caption{Overview of the median number of collected places in the intermediate and final results. The intermediate result size for canceled runs (DNF) where post-processing timed out reveal that in these cases, basic $\tau$-filtering and implicitness checking are simply not sufficient to handle the noise and complexity.}
  \label{tab:collected-places}
\begin{tabular}{lrrr}
\hline
\textbf{Event Log} & \makecell{\textbf{\#Collected} \\ \textbf{Places} \\ \textbf{(median)}} & \makecell{\textbf{\#Collected} \\ \textbf{Places (DNF)} \\ \textbf{(median)}} & \makecell{\textbf{\#Places after} \\ \textbf{Post-Processing} \\ \textbf{(median)}} \\
\hline
Teleclaims & 43 & 1247 & 14 \\
Repair & 22 & - & 10 \\
Reviewing & 36 & 675 & 13 \\
HospitalBilling & 111 & 4072 & 27 \\
Sepsis & 60 & 2437 & 16 \\
RTFM & 45 & 2710 & 14 \\
BPIC12 & 99 & 907 & 26 \\
BPIC20.DomDecl & 95 & 6826 & 24 \\
\hline
\end{tabular}
\end{table}

Further, we appended an automatic evaluation of \emph{alignment-based fitness} and \emph{ETC precision}~\cite{Buijs2012Role}. We also computed their harmonic mean, the \emph{F1-score}. As alignment computation on huge models with low fitness can take prohibitively long, we also limited the execution time to 10 minutes for that task. Similar to the post-processing above, an exceeded timeout is a good indication for an undesirable model---either due to its complexity or inadequate fitness.

Consider \Cref{fig:pareto-fitness-precision}, \Cref{fig:pareto-fitness-places} and \Cref{fig:pareto-precision-places} that show the trade-offs between the metrics fitness, precision, and number of places (stand-in for complexity, so lower is better) for the models with a top 25\% F1-score. We see that acceptable fitness and precision are achievable on any input log. However, we immediately notice that on the complex and noisy real-life logs, these scores come at the cost of large models. We rediscover the above finding that for these inputs, many feasible non-implicit places are found. This is also a consequence of the exponential increase in possible activity combinations with higher numbers of activities, which these logs tend to contain. For example, we should expect a log with 24 possible activities to require more places to constrain its model's behavior than for 8.

\begin{figure}[htbp]
  \centering
 \includegraphics[width=0.9\linewidth]{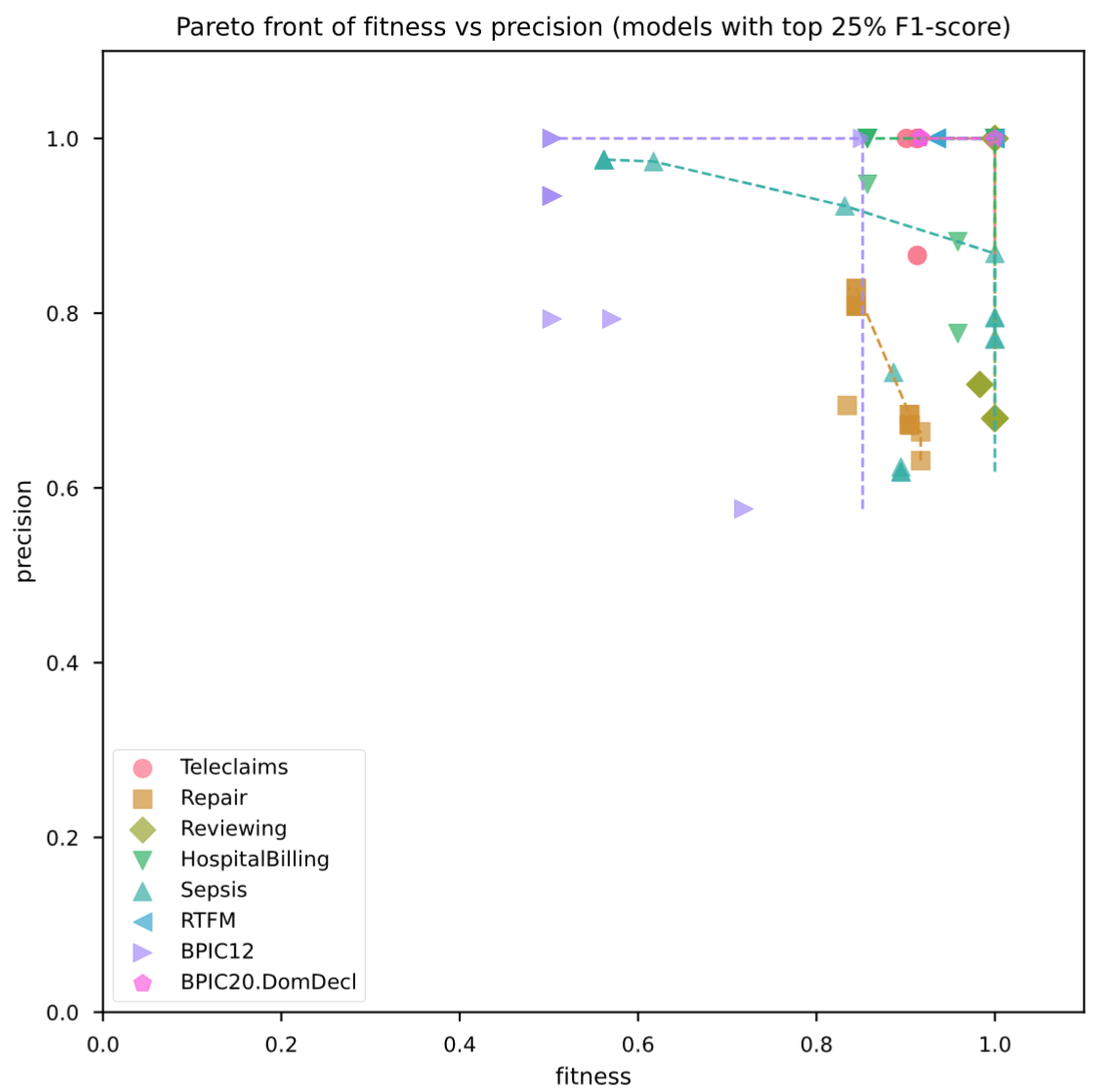}
  \caption{The distribution and Pareto front of fitness vs.\ precision of the best models (in terms of top 25\% F1-score).}
  \label{fig:pareto-fitness-precision}
\end{figure}

\begin{figure}[htbp]
  \centering
   \includegraphics[width=0.9\linewidth]{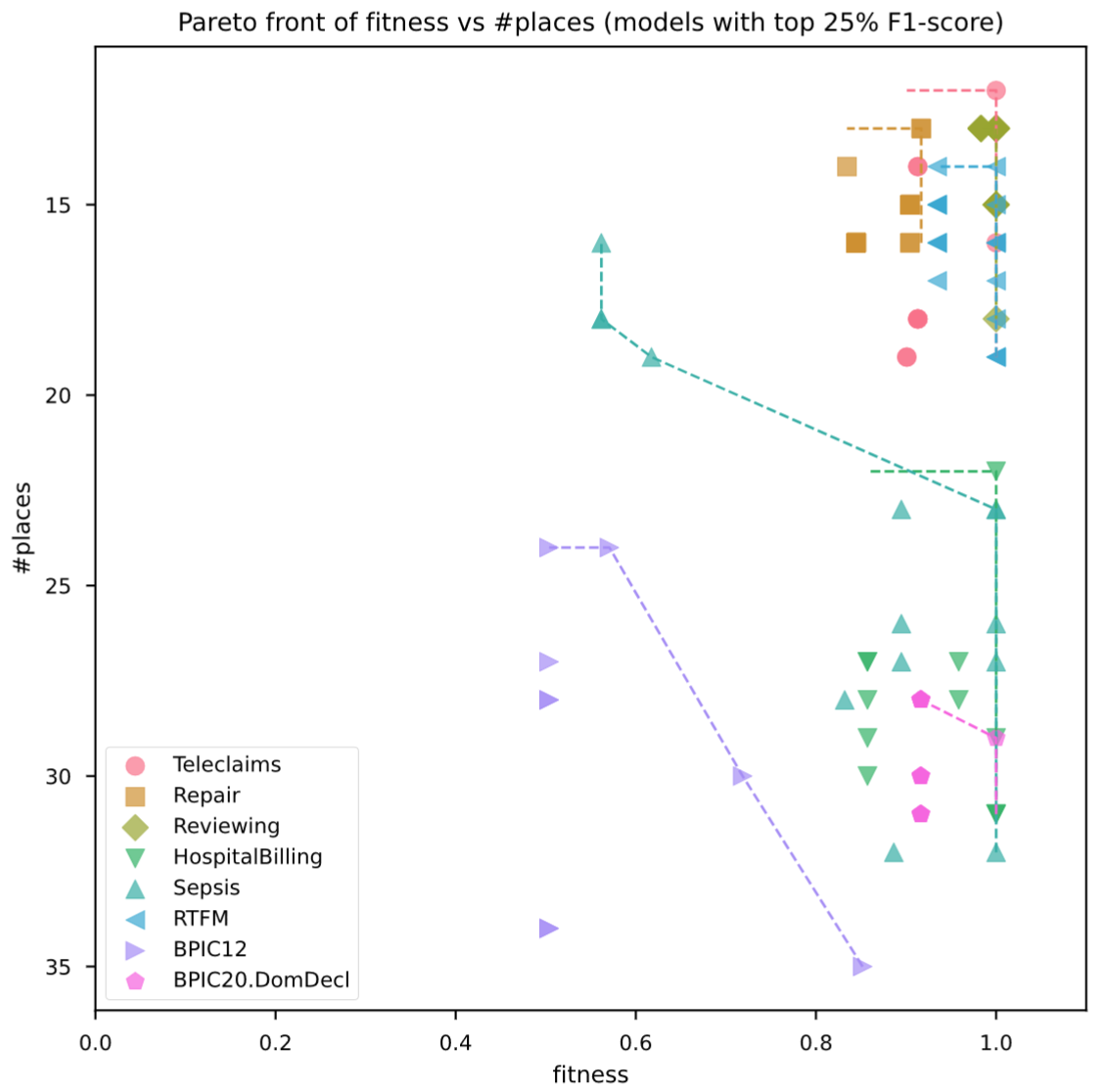}
  \caption{The distribution and Pareto front of fitness vs.\ \#places (lower is better) of the best models (in terms of top 25\% F1-score).}
  \label{fig:pareto-fitness-places}
\end{figure}

\begin{figure}[htbp]
  \centering
  \includegraphics[width=0.9\linewidth]{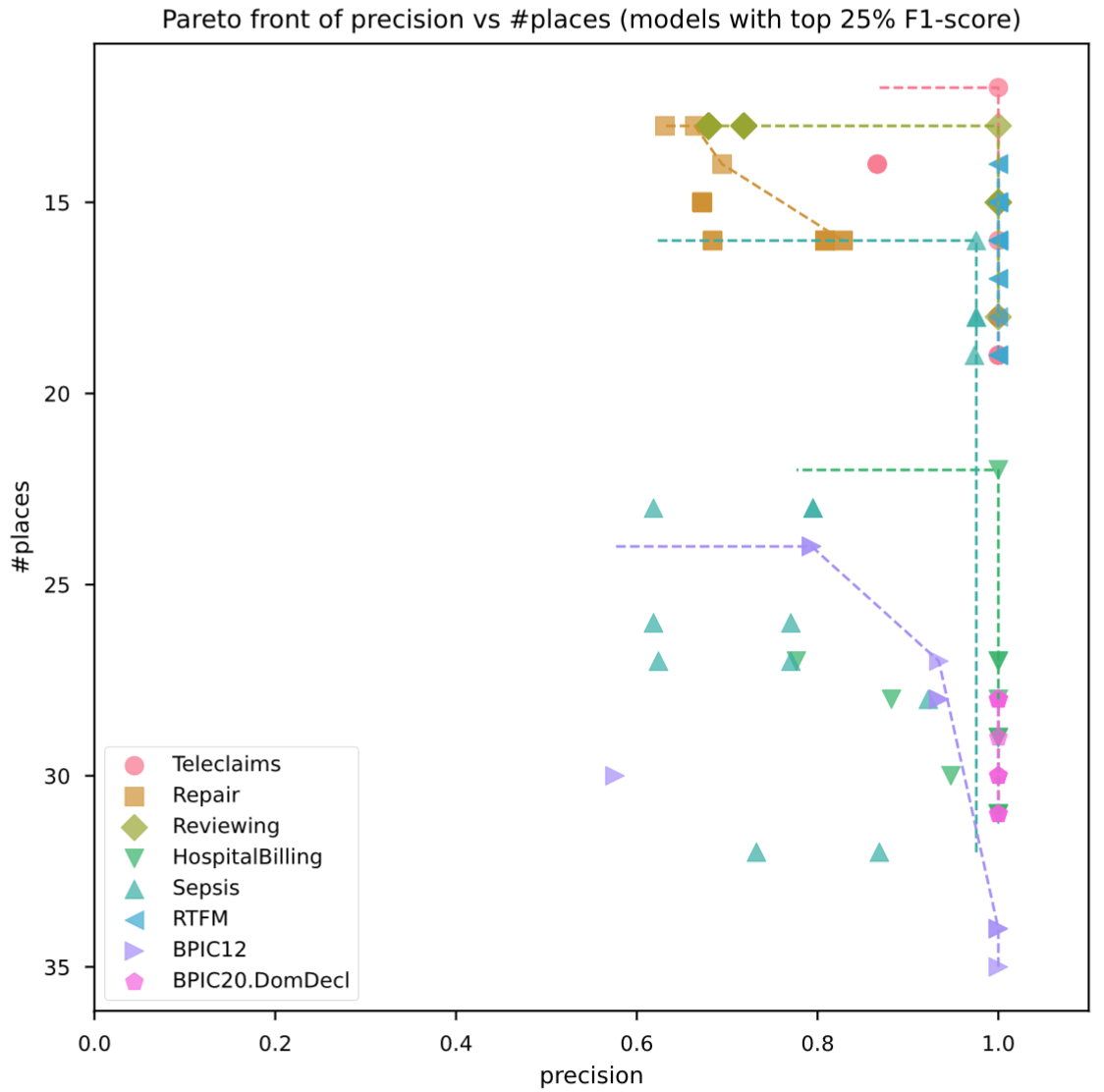}
  \caption{The distribution and Pareto front of precision vs.\ \#places (lower is better) of the best models (in terms of top 25\% F1-score).}
  \label{fig:pareto-precision-places}
\end{figure}

We want to emphasize that, naturally, a decrease in $\tau$ and increase in max tree depth lead to more feasible places and in turn longer running times. It is thus nice to see in \Cref{fig:percentile-rank} that the best scoring models can at least also be discovered in the faster runs. As we can see in \Cref{fig:minimal-tree-depth}, maximal fitness and also precision are usually reached even on low tree depths (and thus fast executions), which is in line with the previous observation. It also makes sense that, for higher precision, more constraining places are needed, while high fitness can be achieved with sufficient places to form a connected model.

For example, on \textsc{Teleclaims}, the best results with an F1-score of 1 were achieved in, e.g., 0.5s (depth limit 3), 8.3s (depth limit 4) or 48.67s (depth limit 6). This phenomenon is mainly caused by the implicitness evaluation, as at some point, almost every new candidate is implicit to one in the intermediate result. This highlights the importance of incorporating model property invariants like implicitness-freedom into constraint generation. We point to~\cite{vanderAalst2018DiscoveringGlue} and the notion of redundancy as another possible monotonic pruning property. Making this ordering agnostic for formal guarantees is not trivial, however. A practical remedy would be the detection of intermediate model quality improvement stagnation in a future composition strategy. This is manually achievable in our ProM live discovery interface previously shown in \Cref{fig:prom-plugin}. The graph or list visualization of the as-of-now accepted places makes it easy to see when they are not changing anymore. For example, when running the \textsc{RTFM} discovery interactively for $\tau = 0.7$, we can notice that the result stops changing after about three minutes, even if the full traversal ($\approx$ 6mil evaluations) takes 47 minutes.

Next, we turn our attention towards the effect of the $\tau$ and max tree depth parameterizations on the resulting model quality. \Cref{fig:correlation} shows the Spearman rank correlation coefficient (it assesses how monotonic the relationship between two variables is) between these parameters and fitness/precision. To look at their effect independently, we computed these for each of these two separately while controlling for the other, i.e., computing the correlation per controlled group and averaging them. There exists an interesting inversion in the relationship between $\tau$ and fitness for the smaller synthetic logs and the larger more noisy logs. While an increased $\tau$ has a positive impact on the former, it has a negative on the latter. That means for the more simplistic inputs, a higher $\tau$ leads to higher fitness, while complex logs require more less-than-perfectly locally fitting places to achieve an overall fitting model. Precision always requires lowered $\tau$ values. Probably because there are too few high-fitness places to properly constrain the resulting model otherwise. For the maximum tree depth, an increase seems to always have a positive influence on fitness and precision, though to varying degrees. On the synthetic logs, it is mostly important for precision, while on the real-life logs, fitness really benefits from more candidate space exploration.

\begin{figure}[htbp]
  \centering
  \includegraphics[width=0.9\linewidth]{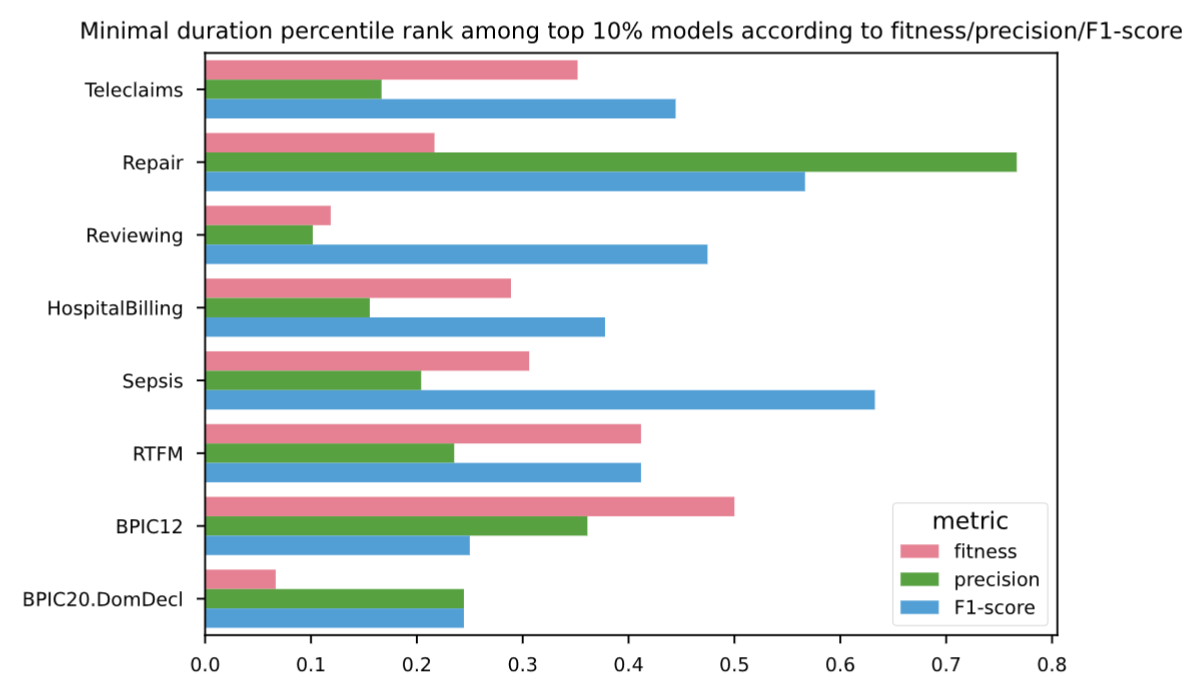}
  \caption{The percentile rank of the fastest execution to reach a model scoring in the top 10\% according to fitness/precision/F1-score. These fractions express that even the best models the algorithm can find for any parameterization are (also) discovered on the faster runs. It is clear to see that both in terms of fitness and precision, the best achievable results are consistently also achievable in the shortest possible time. As one might expect, to maximize them jointly (F1-score), more time is necessary. For example, on BPI20. \textsc{DomDecl} maximal fitness was reached by a run in the 6.7\%-percentile of runtimes, while to achieve maximal F1-score, a run in the 24.4\%-percentile was necessary.}
  \label{fig:percentile-rank}
\end{figure}

\begin{figure}[htbp]
  \centering
    \includegraphics[width=0.9\linewidth]{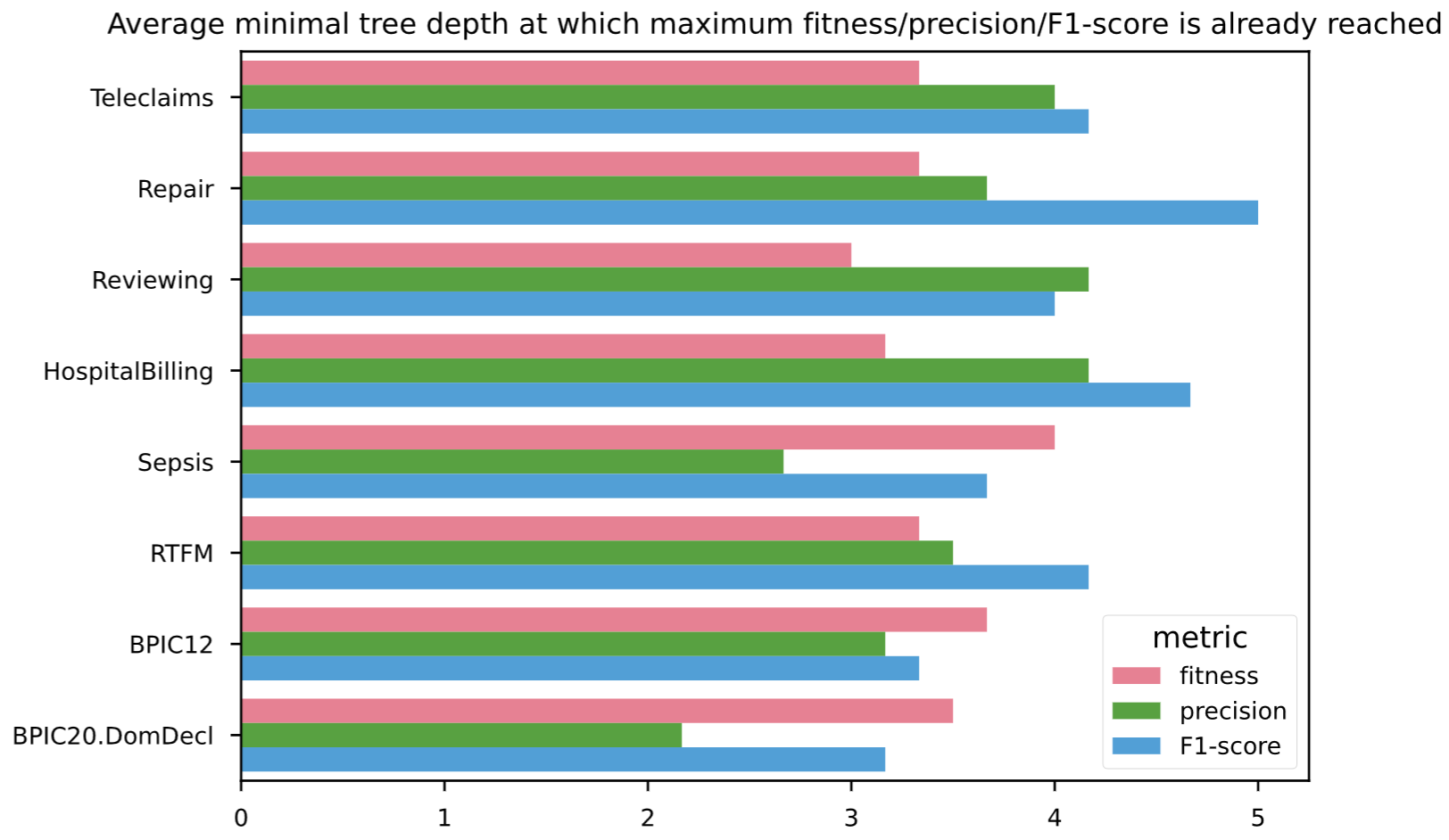}
  \caption{The average minimal \emph{max tree depth} at which the model quality metrics are already maximized at. The average is over the different values of $\tau$, as it has a big impact on the achievable metric. We see that, especially on the smaller synthetic logs, a high model fitness can be achieved with very low tree depth, thus saving running time. Precision appears to require more tree levels on average on these logs, and particularly to maximize both at the same time (F1-score), more candidate places are required.}
  \label{fig:minimal-tree-depth}
\end{figure}

\begin{figure}[htbp]
  \centering

  \begin{subfigure}{0.9\linewidth}
    \centering
    \includegraphics[width=\linewidth]{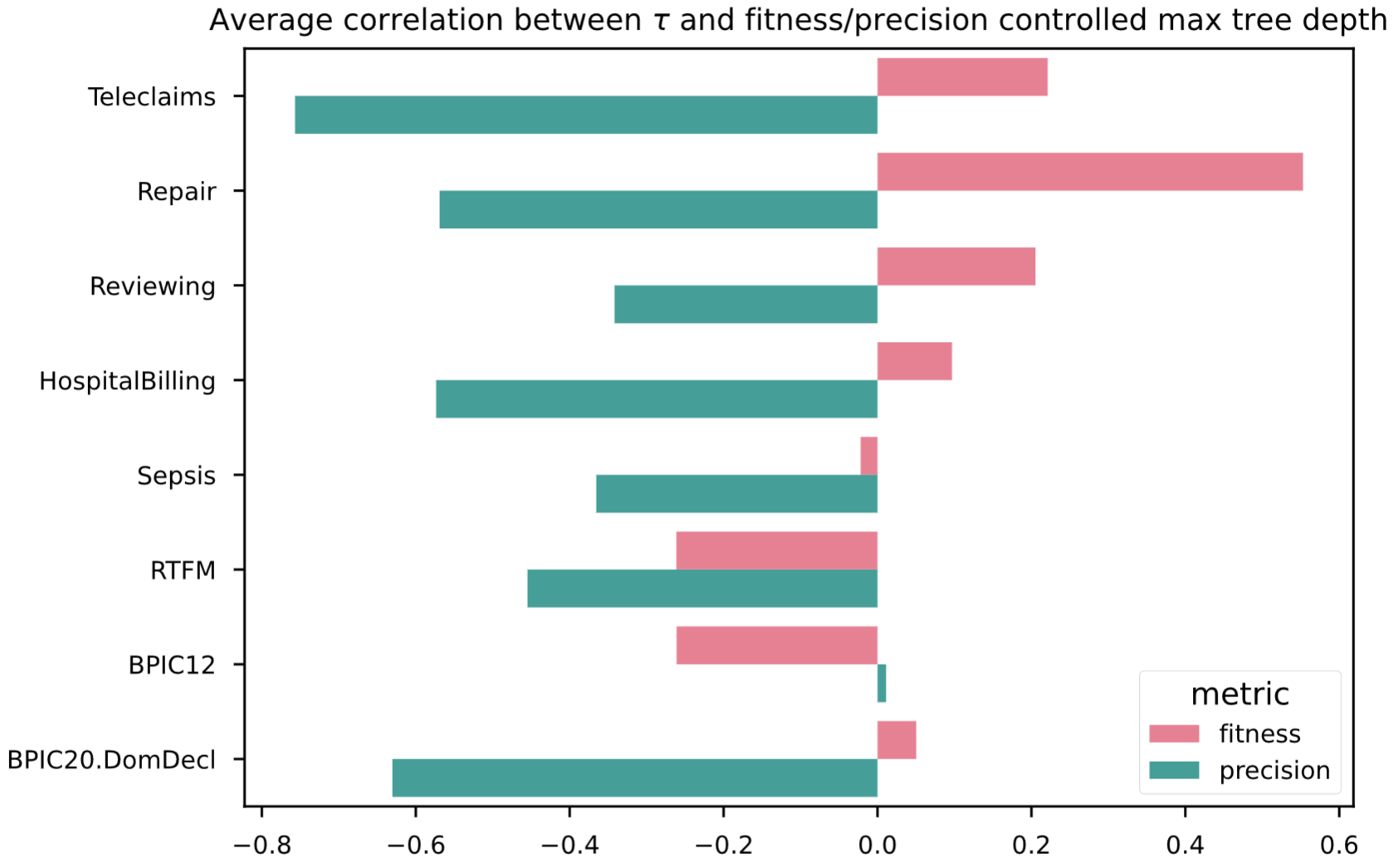}
    \caption{The average spearman correlation efficient between $\tau$ and the fitness/precision of the resulting models controlled by \emph{max tree depth}.}
    \label{fig:correlation-a}
  \end{subfigure}

  \vspace{1em}

  \begin{subfigure}{0.9\linewidth}
    \centering
    \includegraphics[width=\linewidth]{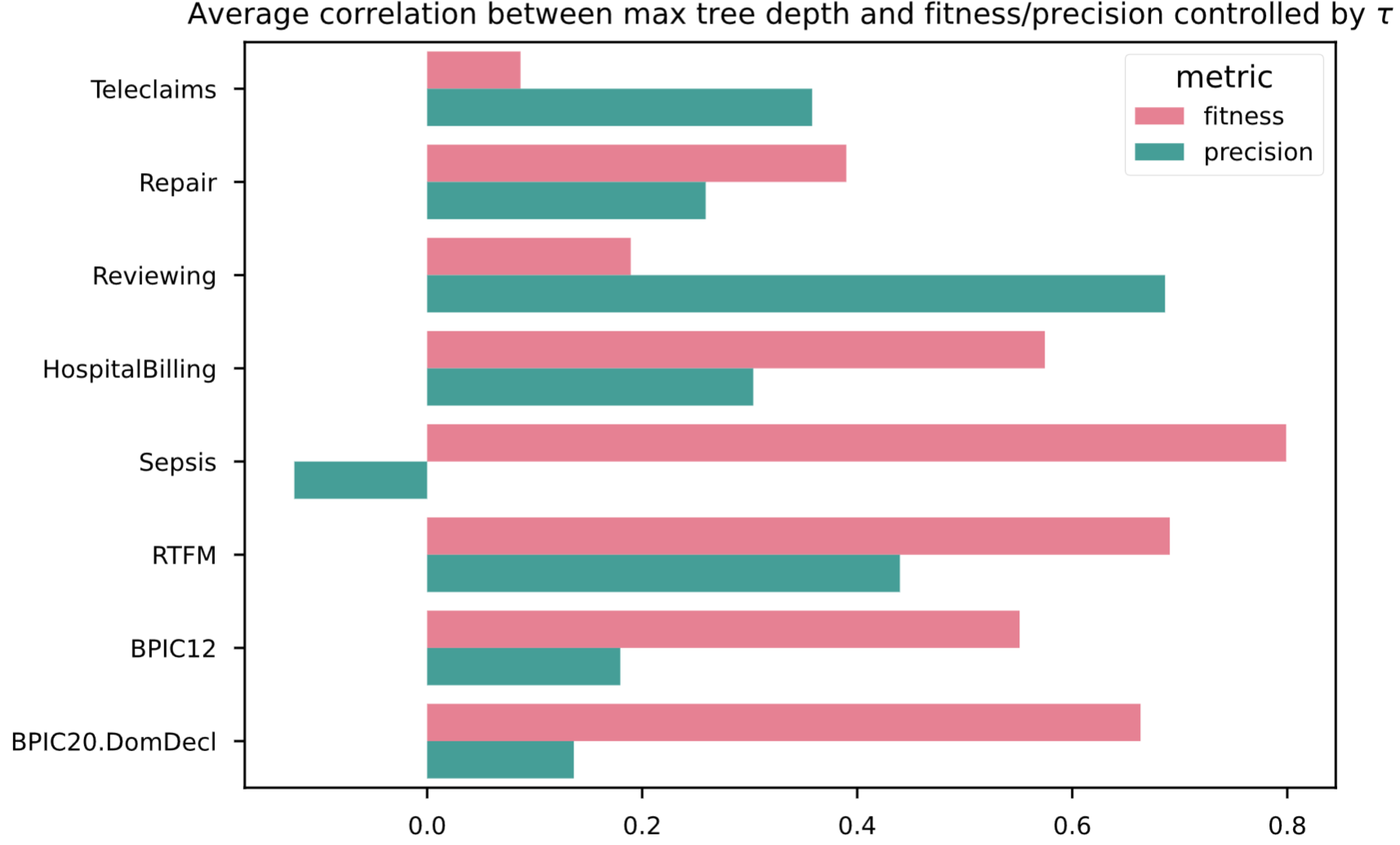}
    \caption{The average Spearman correlation efficient between \emph{max tree depth} and the fitness/precision of the resulting models controlled by $\tau$.}
    \label{fig:correlation-b}
  \end{subfigure}

  \caption{The correlation between the varied parameters and model metrics.}
  \label{fig:correlation}
\end{figure}

Lastly, we take a look at some of the concrete models we were able to discover during this evaluation. On the synthetic log \textsc{Teleclaims}, the model shown in \Cref{fig:model-teleclaims} with a fitness of 0.91 and precision of 1 shows on the one hand the tendency to discover more complex models but also the ability to provide high precision. Long-term dependencies like \emph{determine likelihood of claim} and \emph{initiate payment}, which are usually not discovered by other popular approaches, do complicate the model but may also provide some new insights. As the model was discovered for $\tau = 0.8$, we have the guarantee that each constraint represented by a place is applicable to at least 80\% of the observed behavior.

On \textsc{RTFM} which is a log with varied behavior, three executions with $\tau = 0.7$, max tree depth $\in \{3, 4, 5\}$ and LPBased implicit place removal produced an interesting model that is shown in \Cref{fig:model-rtfm}. It has an alignment-based fitness of 0.93, precision of 1 and perfectly fits 68\% of the log. We also see some of the quirks of the used basic $\tau$ threshold, as the activity \emph{Appeal to Judge} can never occur in a model trace due to its connected self-loop place. As that behavior apparently occurs in less than 70\% of traces, the place passes the filter. This strictly-said dead part could rather be regarded as an approximation.

\begin{figure}[htbp]
  \centering
  \rotatebox{90}{%
    \begin{minipage}{\textheight}
      \centering
      \includegraphics[
        width=\linewidth,
        height=\textwidth,
        keepaspectratio
      ]{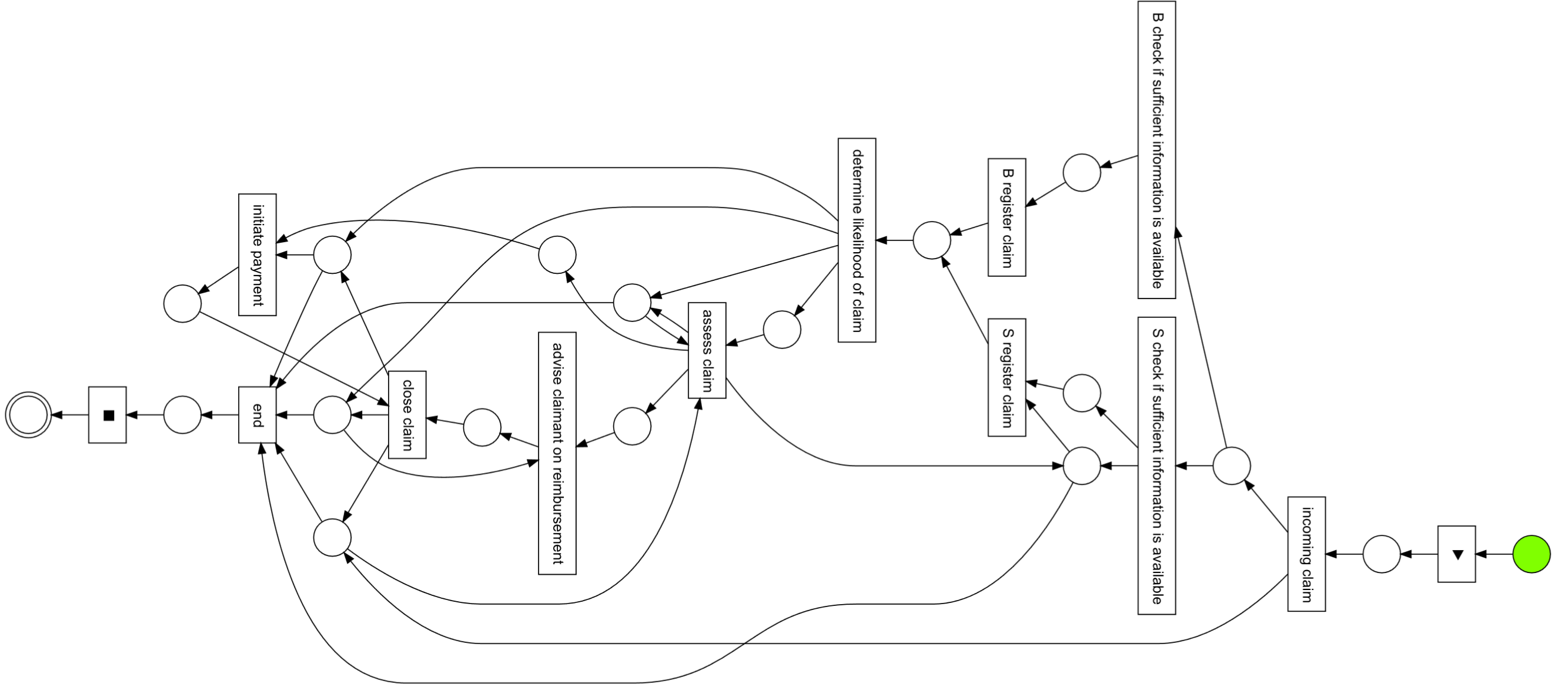}

      \captionof{figure}{One of the best models for \textsc{Teleclaims}, discovered for $\tau = 0.8$ and depths 4 and 5. It contains many long-term dependencies, e.g., between \emph{determine likelihood of claim} and \emph{initiate payment}. Its scores are 0.91 for fitness and 1 for precision.}
      \label{fig:model-teleclaims}
    \end{minipage}%
  }
\end{figure}

\begin{figure}[htbp]
  \centering
  \rotatebox{90}{%
    \begin{minipage}{\textheight}
      \centering
      \includegraphics[
        width=\linewidth,
        height=\textwidth,
        keepaspectratio
      ]{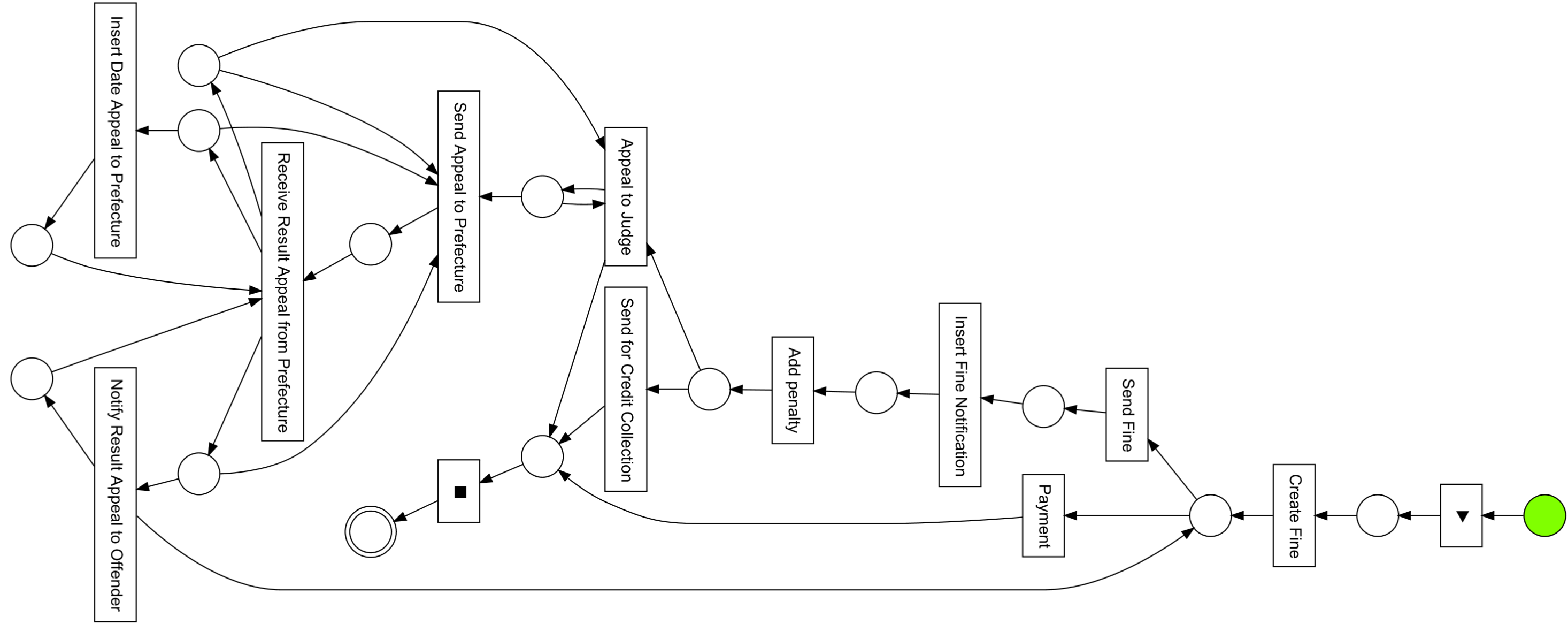}

      \captionof{figure}{One of the best models for the \textsc{RTFM} log. It was discovered on three different runs with $\tau = 0.7$ in 1.8s (depth limit 3), 10.4s (depth limit 4) and 45.3s (depth limit 5). It has a fitness of 0.93, precision of 1 and perfectly fits 68\% of the log.}
      \label{fig:model-rtfm}
    \end{minipage}%
  }
\end{figure}

\section{Related Work}
\label{sec:related-work}

As process discovery has provided a long-standing challenge in process mining~\cite{vanderAalst2003WorkflowMining,Tiwari2008Review,vanderAalst2013DiscoveringPetriNets}, many different approaches have been proposed \cite{vanderAalst2016DataScience,Ch2-PMhandbook-SS22,Ch3-PMhandbook-SS22}. Historically, it is predated by the Petri net synthesis problem concerned with creating a Petri net that precisely covers a behavioral specification~\cite{Badouel2015PetriNetSynthesis}. From this line of research stems a vast family of region-based approaches on languages by Carmona et al.~\cite{Carmona2008RegionBased}, Werf et al.~\cite{vanderWerf2009ProcessDiscovery}, van der Aalst et al.~\cite{vanderAalst2010ProcessMining} and van Zelst et al.~\cite{vanZelst2015AvoidingOverfitting,vanZelst2017DiscoveringWorkflow}. They mostly use (mixed) Integer Linear Programming (ILP) on a specifically designed optimization problem where solutions correspond to Petri net places. These approaches are very similar to ours in their aim to directly synthesize places that fit the observed behavior, thus constructing a Petri net in a bottom-up manner. They also avoid introducing restrictive modeling/representation bias and can therefore model non-free choice constructs. Historically, they struggle with noise and scalability issues. In contrast to our proposed framework, noise is typically tackled by pre-filtering as a consequence of the ``global'' view on place fitness. Whereas we can adaptively evaluate each candidate place locally and in dependence of the existing intermediate solution. There is also genetic process discovery~\cite{vanderAalst2005GeneticProcessMining,vanEck2014GeneticProcessMining,Tsai2010TimeInterval,deMedeiros2007GeneticProcessMining} which similarly has performance issues and cannot provide any guarantees about the resulting model.

Another early bottom-up work is the Alpha Miner by van der Aalst et al.\ in~\cite{vanderAalst2004WorkflowMining} which has seen many proposed variants~\cite{Li2007ExtendingAlpha,Wen2007MiningNonFreeChoice,Wen2010MiningPrimeInvisible,vanderAalst2022DiscoveringDirectlyFollows} which address noise handling and other limitations of the initial work. While the approach does synthesize individual places, and combines them into a full Petri net, the places themselves are derived based on extracted activity relations and not the full behavior of the log. Because of this, the approach effectively underlies stronger modeling bias as the patterns which are detectable in the activity relations are not perfectly expressive. Additionally, due to the simplicity of the algorithm---all places are discovered at once---places can also not be evaluated relative to each other, e.g., leading to deadlocked models.

The very successful and popular Inductive Miner by Leemans et al.~\cite{Leemans2013DiscoveringBlockStructured} and its variants~\cite{Leemans2013DiscoveringInfrequent,Leemans2016ScalableProcess,Brons2021StrikingNewBalance} can almost be viewed as a prototypical top-down approach. Working with the directly-follows graph of the given log, the idea is to recursively divide and conquer (solve) the discovery on sub graphs, and then collecting the sub problem solutions in the emerging tree structure. Strong representational bias is introduced in this step as the formalism used, process trees, only captures block-structured Petri nets. Additionally, as the patterns for high-level structure by which to split into sub problems are manually defined, modeling bias is introduced as well. Apart from the inability to discover non-free choice constructs or long-term dependencies, these approaches tend to underfit, providing low precision models. However, an important advantage is their support of silent (tau) transitions.

Building on the inductive miner framework, an algorithm that includes a bottom-up subroutine, precisely to address lacking precision, has also been proposed~\cite{Leemans2018IndulpetMiner}. The so-called Indulpet Miner is also process tree based, so it inherits the same representational bias as mentioned above. The bottom-up recursion tries to identify fitting partial cuts, essentially checking whether certain activity relations between the partial cut partitions are satisfied. Petri net places are more expressive than such predefined patterns but there are of course fewer cut candidates. Still, the authors mention the very high cost of full log traversal and the exponential complexity incurred in this step. As these points are precisely what our framework tries to address, it may be worthwhile to investigate an application of this concept to that technique, using partial cuts as candidates instead of places.

There are also discovery algorithms focusing on non-Petri net models. For one, there is the Split Miner proposed by Augusto et al.~\cite{Augusto2019SplitMiner}. It directly discovers simple BPMN models (using OR, XOR and AND-gateways) which are equivalent to block-structured Petri nets. They thus purposely introduce the same representation bias as the above mentioned inductive miner to keep the resulting models easily interpretable. However, their discovery of the BPMN gateways (splits and joins) is global and not fundamentally limited by the division into independent sub problems.

The heuristic miner by Weijters et al.~\cite{Weijters2011FlexibleHeuristics} discovers causal nets (C-nets) for example. The algorithm can deal with noise and provides slider-based abstraction to generate simple models. However, C-nets do not posses imperative semantics like the otherwise popular Petri nets and are thus not directly comparable.

Beyond our setting of case-centric event logs and control-flow models, bottom-up methods are also being investigated. For example, Agent System Mining~\cite{Tour2021AgentSystemMining}, a technique that composes multi agent systems from individual agent models, has been proposed by Tour et al. To lift the traditional case-centric process notion to a multi-agent one, they propose to mine for sub models and infer their interaction patterns. Similar to the object-centric paradigm~\cite{vanderAalst2019ObjectCentric}, a richer modeling formalism is required. One might consider such techniques to operate on a ``higher level'', as in, their candidates are \emph{models} instead of relatively atomic constructs such as Petri net places.

Lastly, as amply mentioned, 
the works of Mannel et al.~\cite{Mannel2019FindingComplex,Mannel2019FindingUniwired,Mannel2020RemovingImplicit,Mannel2020ImprovingState,Mannel2022DiscoveringProcessModels} on the so-called eST-Miner laid the groundwork for our approach. Our framework is a slight generalization of the former algorithms into this prototypical bottom-up setting.

\section{Conclusions}
\label{sec:conclusions}

In this paper, we presented \textsc{SPECpp}, a framework for bottom-up Petri-net discovery that casts discovery as the problem of efficiently identifying promising candidate places and composing them into a high-quality model. We formalized this process as a Proposal–Evaluation–Composition (PEC) cycle followed by post-processing.

A key advantage of bottom-up discovery is that it does not require the model to conform to a predefined block structure and can therefore discover complex constructs, including non-free-choice behavior and long-term dependencies.
At the same time, the flexibility of the framework and implementation does facilitate encoding restrictive design choices, e.g., uniwiring, in a seamless manner. The advantage directly leads into its biggest weakness: exponential complexity and thus running time.

To address the resulting exponential search space, we introduced a memory-efficient recursive representation of the candidate space that supports rigorous pruning using monotone constraints. This allows large sets of candidate places to be considered indirectly while only a fraction need to be evaluated explicitly.
Specifically, we used the locally (individual place) testable property, fitness, and aligned its monotonicity to our constraint system. Complementary to pruning are the candidate ordering prioritization mechanisms like heuristics.

On the side of candidate composition, we presented our general composition routine in which candidate places are evaluated and incorporated into an internal state, with the additional opportunity to generate constraints for future candidates. Further, we briefly described a greedy procedure that is based on relative evaluations to the intermediate result and candidate filtering based on existing eST-Miner procedures.

To accompany the conceptual framework, we provide a software framework for development as well as tool support with an interactive user interface inside the ProM ecosystem. A practical feature being the possibility to set a time limit, as the intermediate result can always be utilized.

In our evaluation, we demonstrated the applicability of the framework to synthetic as well as real-life event data. The evaluation demonstrates that the framework can discover high-quality models on both synthetic and real-life event logs, although its current runtime is not yet competitive with established discovery algorithms. Importantly, good models were often obtained long before exhaustive traversal of the candidate space, indicating that effective ordering and pruning strategies can make incomplete traversals practically useful.

We envision that this kind of discovery approach should be tuned to the user's needs. If there is enough time, a thorough traversal which actually provides some guarantees can be configured, otherwise a more heuristically guided instantiation can be used. The appeal lies in the sense of completeness that a well constrained candidate space traversal can provide.

Directions for future work emerge naturally from the structure of our conceptual framework. Starting with proposal, more complex heuristics that incorporate more log information and allow monotonic pruning of places would be a great start. Even slightly extending the $\tau$ threshold to relative fitness~\cite{vanderAalst2018DiscoveringGlue} may have potential. Further, an extended constraining strategy based on precision or other quality dimension metrics could improve candidate pruning speed in this first step while providing additional guarantees. Supporting discovery of silent transitions or non-uniquely labeled transitions would be a huge improvement.

Moving over to the less developed composition side, even more impactful improvements could be made. From defining a useful subclass of Petri nets that naturally provides subtree-monotonic constraints which would permit new guarantees on the final model (like uniwiring), to relaxing the current greedy strategy to combat running into local optima. The latter could be achieved by mirroring the tree structure of places for sets of places. That is, multiple compositions could be directly and indirectly considered. With monotone properties, scoring metrics, and at least approximate guiding heuristics, this extension would drastically improve the completeness of this approach. In general, this framework benefits from both rigorous provable candidate pruning and candidate composition space restrictions for model quality, as well as approximate heuristics used for ordering prioritization in incomplete traversals.

\paragraph{Acknowledgements}
We thank the Alexander von Humboldt (AvH) Stiftung for supporting our research. Additionally, we thank the Ministry of Culture and Science of the German State of North Rhine-Westphalia (MKW) and the Excellence Strategy of the Federal Government and the L\"{a}nder for supporting our research.

\bibliographystyle{spmpsci}
\bibliography{references}

\end{document}